\documentclass[11pt]{article}

\usepackage[letterpaper,margin=1in]{geometry}
\usepackage{amsmath,amssymb,mathtools}
\usepackage{amsthm}
\usepackage{algorithm}
\usepackage[noend]{algpseudocode}
\usepackage{enumitem}
\usepackage{microtype}
\usepackage{xcolor}
\usepackage{tikz}
\usepackage{float}
\usetikzlibrary{arrows.meta,positioning}
\usepackage[colorlinks=true,linkcolor=blue,citecolor=blue,urlcolor=blue]{hyperref}

\makeatletter
\newenvironment{breakablealgorithm}
  {%
   \begin{center}
     \refstepcounter{algorithm}%
     \hrule height.8pt depth0pt \kern2pt
     \renewcommand{\caption}[2][\relax]{%
       {\raggedright\textbf{\ALG@name~\thealgorithm} ##2\par}%
       \ifx\relax##1\relax
         \addcontentsline{loa}{algorithm}{\protect\numberline{\thealgorithm}##2}%
       \else
         \addcontentsline{loa}{algorithm}{\protect\numberline{\thealgorithm}##1}%
       \fi
       \kern2pt\hrule\kern2pt
     }%
  }
  {%
     \kern2pt\hrule\relax
   \end{center}
  }
\makeatother

\setlist[itemize]{leftmargin=2.25em,itemsep=0.15em,topsep=0.35em}

\theoremstyle{plain}
\newtheorem{theorem}{Theorem}[section]
\newtheorem{lemma}[theorem]{Lemma}

\newtheorem{corollary}[theorem]{Corollary}

\newtheorem{observation}[theorem]{Observation}

\theoremstyle{definition}
\newtheorem{definition}[theorem]{Definition}

\theoremstyle{remark}

\newtheorem*{remark*}{Remark}

\newcommand{\Active}{I}
\newcommand{\Bid}{\operatorname{bid}}
\newcommand{\Cost}{\operatorname{cost}}

\title{Full Justified Representation under Hare and Droop Quotas in Polynomial Time}

\author{Yizhou Ai\\University of Toronto\\ \texttt{yizhou.ai@mail.utoronto.ca}}
\date{}
\begin{document}
\maketitle

\begin{abstract}
    I study \emph{Full Justified Representation} (FJR) in approval-based multiwinner elections under both the Hare and Droop quota conventions. I introduce a descending-budget algorithm in which voters distribute their remaining budgets across their current representation gaps and candidates are purchased whenever the resulting offers cover a common price. With candidate price $\lambda_H=n/k$, the algorithm returns a Hare-FJR committee; with candidate price $\lambda_D=n/(k+1)$, it returns a committee satisfying the more demanding Droop-FJR axiom of Casey and Elkind. The two guarantees share a historical-payment invariant and a terminal row--column accounting argument, while the Droop proof requires a new residual-budget argument when all $k$ paid seats are filled. Both variants are deterministic once the voter and candidate orders are fixed and use $O(kmn)$ rational operations.
\end{abstract}

\section{Introduction}\label{sec:introduction}

Approval-based multiwinner elections ask a set of voters to select a fixed-size committee from a set of candidates, where each voter reports the candidates she approves. The model has been studied extensively as a framework for representative collective choice; Lackner and Skowron provide a comprehensive treatment of its principal rules, axioms, algorithms, and applications \cite{lackner2023multiwinner}. A central objective is proportional representation: sufficiently large groups of voters with sufficiently aligned preferences should receive a commensurate amount of representation in the selected committee. The justified-representation framework formalizes this objective without assuming that the electorate is partitioned into parties. Its basic axiom, \emph{Justified Representation} (JR), protects a group of at least one Hare quota of voters when the group agrees on a candidate; stronger axioms such as \emph{Proportional Justified Representation} (PJR) and \emph{Extended Justified Representation} (EJR) scale this guarantee with the size and cohesiveness of the group \cite{aziz2017jr,fernandez2017pjr}. Subsequent work has studied both the complexity of constructing and verifying such committees and their empirical prevalence across approval profiles \cite{aziz2018ejr,bredereck2019experimental}.

A useful way to organize these axioms is to separate two modeling choices: what makes a coalition sufficiently \emph{cohesive}, and what it means for that coalition to be \emph{represented}. PJR and EJR use the standard notion of $\ell$-cohesiveness, which requires a coalition of at least $\ell$ Hare quotas to unanimously approve at least $\ell$ common candidates. \emph{Full Justified Representation} (FJR) weakens this common-intersection requirement. A coalition may instead identify an alternative set $T$ that is affordable with its proportional share and from which every coalition member approves at least $\ell$ candidates, although the approved candidates may differ across voters. FJR then applies the individual-witness representation requirement of EJR: at least one voter in the coalition must obtain utility at least $\ell$ from the selected committee. In this sense, FJR protects groups whose agreement is distributed across a small, densely approved witness set rather than concentrated on the same candidates \cite{peters2021proportional,KalayciLK2025}.

FJR was introduced by Peters, Pierczy\'nski, and Skowron in the more general participatory-budgeting setting. They showed that the \emph{Method of Equal Shares} (MES), despite satisfying EJR, may violate FJR, and established the existence of FJR outcomes through the computationally demanding \emph{Greedy Cohesive Rule} (GCR) \cite{peters2021proportional}. Aziz et al. later incorporated GCR into a best-of-both-worlds construction that outputs a lottery satisfying ex-post FJR and ex-ante Strong UFS, but their FJR construction is not polynomial-time computable \cite{aziz2023bestofbothworlds}. Kalayci, Liu, and Kempe introduced \emph{Full Proportional Justified Representation} (FPJR), which combines the weak cohesiveness notion of FJR with the collective-representation requirement of PJR. They showed that several efficient rules satisfy FPJR, while PAV may violate it, and emphasized that no polynomial-time rule was known to guarantee FJR \cite{KalayciLK2025}. Casey and Elkind subsequently developed Droop-quota counterparts of the JR axioms and showed that a modification of GCR satisfies Droop-FJR; this result likewise relies on GCR, which is not known to be polynomial-time computable \cite{casey2026droop}. Thus, efficient construction remained open for both Hare-FJR and the more demanding Droop-FJR axiom.

The difficulty is structural rather than merely implementational. An FJR violation is witnessed by a triple $(S,T,\ell)$, and the definition quantifies simultaneously over voter coalitions, alternative candidate sets, and representation thresholds. The witness set $T$ need not contain $\ell$ candidates approved by every voter in $S$, so the common-candidate arguments used for standard cohesive groups do not apply directly. Moreover, verifying whether a given committee satisfies FJR is coNP-complete \cite{KalayciLK2025}; consequently, a construction cannot be expected to rely on repeatedly recognizing and eliminating arbitrary FJR violations by a direct polynomial-time subroutine. A successful proof must instead rule out all witnesses implicitly and must account jointly for payments already made to candidates in $T\cap W$ and the coalition's remaining ability to support candidates in $T\setminus W$.

\subsection{Results and Technical Overview}\label{subsec:results-overview}

This work gives deterministic polynomial-time constructions for both Hare-FJR and Droop-FJR in approval-based multiwinner elections. The same descending-budget algorithm is used in both cases; only the common candidate price changes. For the Hare version I set
\[
    \lambda_H=\frac{n}{k},
\]
whereas for the Droop version I set
\[
    \lambda_D=\frac{n}{k+1}.
\]
Given fixed voter and candidate orders, each instantiation deterministically returns a committee of size exactly $k$. A direct implementation performs $O(kmn)$ rational operations.


The construction combines two ideas from existing polynomial-time rules. From the \emph{Greedy Justified Candidate Rule} (GJCR), I take the descending-level organization, processing representation targets in the order
\[
    h=k,k-1,\ldots,1.
\]
From the \emph{Method of Equal Shares} (MES), I take the virtual-budget interpretation: every voter begins with one unit of budget, and a candidate is selected only when the offers of its active approvers cover the common price $\lambda$~\cite{brill2023robust,peters2020proportionality,peters2021proportional}. Neither ingredient alone suffices for FJR. GJCR reasons about the underrepresented supporters of a single candidate, while ordinary MES-style payments do not record how many additional approved candidates a voter needs to reach a common representation target.

The new ingredient is \emph{gap-normalized purchasing power}. At a decision snapshot in phase $h$, a voter $i$ is active if she approves at least $h$ candidates in total but currently receives fewer than $h$ approved candidates from the paid set $P$. If her remaining balance is $r_i$, she offers
\[
    \frac{r_i}{h-u_i(P)}
\]
to every unselected candidate she approves. Thus, a voter with a larger representation gap spreads her remaining budget more conservatively, while a voter closer to the target can place a larger offer. Whenever the aggregate offers for a candidate reach $\lambda$, the candidate is purchased for exactly $\lambda$, with each contributing voter charged no more than her current offer. The state is then recomputed. If no candidate is affordable at level $h$, the algorithm lowers the target to $h-1$. Any unfilled seats are added by deterministic padding after the paid construction stops.

The common part of the proof rules out FJR witnesses implicitly. Its central invariant states that, whenever a voter is active at a later decision snapshot, her current offer is at least every payment she made for an earlier selected candidate. This permits a mixed accounting scheme in which selected candidates are charged at their historical payments and unselected candidates at the current offer. Every active voter assigns mixed cost at least one to any approved set containing at least $h$ candidates.

At a normal terminal snapshot of phase $\ell$, summing these mixed costs by voter gives a lower bound of $|S|$ for a putative witness $(S,T,\ell)$. Summing the same quantities by candidate gives a strict upper bound of $\lambda|T|$. For Hare-FJR this contradicts $|S|\ge\lambda_H|T|$, while for Droop-FJR it contradicts the strict condition $|S|>\lambda_D|T|$.

The only substantive difference arises when the paid set reaches size $k$ before phase $\ell$ has a normal terminal snapshot. Under the Hare price, $k\lambda_H=n$, so all voter budgets are exhausted and every voter reaches the target of the final purchase phase. Under the Droop price, the total residual budget is instead
\[
    n-k\lambda_D=\frac{n}{k+1}.
\]
I show that any alleged Droop-FJR witness with threshold at most the final purchase phase would force the witness voters alone to retain more than this amount, yielding a contradiction. This replacement lemma completes the Droop-FJR proof without changing the algorithm beyond its candidate price.

The formal construction is presented in Section~\ref{sec:algorithm}, its implementation and computational complexity are analyzed in Section~\ref{sec:complexity}, and the two FJR guarantees are proved in Section~\ref{sec:proof}.
\subsection{Related work and the JR hierarchy}\label{subsec:related-work}

\paragraph{The classical Hare-quota hierarchy and terminology.}
Under the standard Hare-quota definitions, JR, PJR, and EJR differ in the representation demanded from a standard $\ell$-cohesive group. JR is the one-seat base case; PJR requires that the group collectively approve at least $\ell$ elected candidates; and EJR requires an individual witness in the group who approves at least $\ell$ elected candidates \cite{aziz2017jr,fernandez2017pjr}.

Following the terminology of Ai and Tao, I use \emph{Strong Justified Representation} (SJR) for the requirement that every member of every $\ell$-cohesive group approves at least $\ell$ elected candidates. The same all-level condition was introduced by Brill et al.\ under the name \emph{Individual Representation} (IR). The earlier notion of \emph{semi-strong justified representation}, introduced by Aziz et al., imposes the corresponding individual guarantee only for $1$-cohesive groups. Ai and Tao explicitly adopt the name SJR for the full all-$\ell$ condition \cite{aziz2017jr,brill2025individual,ai2026computational}.

\emph{Average Justified Representation} (AJR) requires the average utility of every $\ell$-cohesive group to be at least $\ell$. As Ai and Tao note, S\'anchez-Fern\'andez et al.\ studied this average-satisfaction condition but did not use the name AJR; Ai and Tao follow the terminology used by Han et al.\ \cite{fernandez2017pjr,han2026likelihood,ai2026computational}. Under these conventions,
\[
    \mathrm{SJR}\Longrightarrow\mathrm{AJR}\Longrightarrow
    \mathrm{EJR}\Longrightarrow\mathrm{PJR}\Longrightarrow\mathrm{JR}.
\]
Unlike the axioms from EJR downward, SJR and AJR committees need not exist. Ai and Tao show that deciding whether an SJR committee exists is $\Theta_2^p$-complete, whereas deciding whether an AJR committee exists is $\Sigma_2^p$-complete \cite{ai2026computational}. The computational study of the classical hierarchy shows a notable separation between construction and verification: EJR committees can be found in polynomial time, whereas verifying EJR or PJR is coNP-complete \cite{aziz2018ejr}. Complementary work quantifies proportionality through proportionality-degree guarantees and studies the complexity of computing or verifying these guarantees \cite{skowron2021proportionality,janeczko2022proportionality}, while experimental studies examine the prevalence and diversity of JR-, PJR-, and EJR-satisfying committees \cite{bredereck2019experimental}.

\paragraph{Weak cohesiveness and full representation.}
The original Hare-quota version of FJR changes the cohesiveness side of the definition. Instead of requiring $\ell$ candidates approved unanimously by the coalition, it allows a witness set $T$ that is affordable with the coalition's proportional share and from which every coalition member approves at least $\ell$ candidates. FJR combines this weaker cohesiveness condition with EJR's individual-witness representation requirement \cite{peters2021proportional}. FPJR uses the same weak cohesiveness condition but replaces the individual-witness conclusion by PJR's collective-representation conclusion \cite{KalayciLK2025}. The best-of-both-worlds literature also considers FJR as an ex-post guarantee for randomized committee selection: Aziz et al. obtain a lottery satisfying ex-post FJR together with ex-ante Strong UFS, although their FJR construction is not polynomial-time computable \cite{aziz2023bestofbothworlds}. Casey and Elkind systematically study versions of JR, PJR, EJR, FPJR, FJR, PJR+, and EJR+ based on the more demanding Droop quota. They prove that a modified GCR satisfies Droop-FJR, but this rule is not known to be polynomial-time computable \cite{casey2026droop}. The Droop-price instantiation developed here gives a polynomial-time construction for their Droop-FJR axiom.

\paragraph{Core stability.}
Within the classical Hare-quota hierarchy, \emph{Core Stability} is stronger than FJR because it compares every coalition with every proportionally affordable alternative and requires some coalition member to weakly prefer the selected committee. Thus,
\[
    \text{Core Stability}\Longrightarrow\mathrm{FJR}\Longrightarrow\mathrm{EJR},
    \qquad
    \mathrm{FJR}\Longrightarrow\mathrm{FPJR}\Longrightarrow\mathrm{PJR}.
\]
The core has generated a parallel literature on group fairness and stability. Stable lotteries are known to exist under approval utilities \cite{cheng2020group}, while deterministic work studies approximately stable committees \cite{jiang2020approximately,munagala2022approximate}, exact core existence on restricted preference domains \cite{pierczynski2022core}, and existence for bounded numbers of seats, candidates, or voter types \cite{peters2025core,berker2026edge,becker2026core}. General core non-emptiness for approval-based multiwinner elections remains an important open problem.

\paragraph{The ``plus'' branch.}
EJR+ and PJR+ relax cohesiveness in a different direction: they consider a sufficiently large group that unanimously approves at least one unelected candidate, rather than a group possessing an $\ell$-candidate common intersection or an FJR witness set. EJR+ applies the individual-witness conclusion, while PJR+ applies the collective conclusion. Brill and Peters show that
\[
    \mathrm{EJR+}\Longrightarrow\mathrm{EJR},
    \qquad
    \mathrm{EJR+}\Longrightarrow\mathrm{PJR+}\Longrightarrow\mathrm{PJR},
\]
and give the polynomial-time GJCR algorithm for EJR+ \cite{brill2023robust}. The branches are not totally ordered: in particular, EJR+ and FJR are incomparable, and FPJR is incomparable with EJR as well as with the two plus axioms \cite{brill2023robust,KalayciLK2025}.

Figure~\ref{fig:jr-hierarchy} summarizes the implication relations among the Hare-quota axioms discussed above. The diagram is adapted from the hierarchy in Ai and Tao~\cite{ai2026computational}. Droop-quota variants are not included. An arrow $X\to Y$ means that every committee satisfying $X$ also satisfies $Y$; omitted arrows are not intended to assert an implication.

\begin{figure}[H]
    \centering
    \begin{tikzpicture}[
        node distance=9mm and 14mm,
        every node/.style={font=\small},
        implication/.style={-{Stealth[length=2mm]},line width=0.45pt}
        ]
        \node (sjr) {SJR};
        \node[right=of sjr] (ajr) {AJR};
        \node[right=of ajr] (ejr) {EJR};
        \node[right=of ejr] (pjr) {PJR};
        \node[right=of pjr] (jr) {JR};

        \node[above=of ejr] (ejrp) {EJR+};
        \node[above=of pjr] (pjrp) {PJR+};

        \node[below=of ajr] (core) {Core Stability};
        \node[below=of ejr] (fjr) {FJR};
        \node[below=of pjr] (fpjr) {FPJR};

        \draw[implication] (sjr) -- (ajr);
        \draw[implication] (ajr) -- (ejr);
        \draw[implication] (ejr) -- (pjr);
        \draw[implication] (pjr) -- (jr);

        \draw[implication] (ejrp) -- (ejr);
        \draw[implication] (ejrp) -- (pjrp);
        \draw[implication] (pjrp) -- (pjr);

        \draw[implication] (core) -- (fjr);
        \draw[implication] (fjr) -- (ejr);
        \draw[implication] (fjr) -- (fpjr);
        \draw[implication] (fpjr) -- (pjr);
    \end{tikzpicture}
    \caption{Implication relations among the Hare-quota justified-representation axioms discussed in this paper. Droop-quota variants are not shown.}
    \label{fig:jr-hierarchy}
\end{figure}

\paragraph{Algorithms and market-based approaches.}
\emph{Proportional Approval Voting} (PAV) satisfies EJR but is computationally difficult to optimize exactly \cite{aziz2017jr}, while \emph{Sequential Phragm\'en} satisfies PJR and may fail EJR~\cite{brill2024phragmen}. The standard argument establishing EJR for PAV relies crucially on the unanimity built into ordinary $\ell$-cohesiveness. If every voter in an $\ell$-cohesive group $S$ has utility below $\ell$, then some unelected candidate is approved by every voter in $S$; this candidate provides the per-candidate harmonic marginal-gain bound used in the PAV exchange argument. Weak FJR cohesiveness does not provide such a candidate. It only guarantees an affordable witness set $T$ from which every voter approves at least $\ell$ candidates, and these approved subsets may differ across voters. After accounting for candidates already selected, the coalition's remaining approved alternatives may therefore be dispersed across $T\setminus W$, so no single candidate need satisfy the marginal-gain bound required by the EJR argument. Indeed, PAV may violate FPJR~\cite{KalayciLK2025} and hence does not guarantee FJR, since FJR implies FPJR. Thus, the known welfare and local-exchange proofs for EJR do not directly extend to FJR; excluding FJR violations requires an argument that accounts for the witness set $T$ collectively.

Market-based approaches provide a different account of proportionality. MES and related priceability concepts interpret voters as holding budgets and candidates as having prices \cite{peters2020proportionality}; stable priceability and Lindahl-style explanations further connect proportional committees with public-goods markets \cite{PetersP0021}. Priceable committees can also be completed while retaining strong coverage and utilitarian guarantees \cite{brill2024completing}. Related work extends price-based and verifiable proportionality ideas to approval-based participatory budgeting \cite{brill2023proportionality,kraiczy2023adaptive}. In the present construction, ordinary fundability is augmented with threshold information: the gap-normalized offer $r_i/(h-u_i(P))$ records how many further approved candidates a voter needs to reach the current target. This combines the market interpretation of MES with the descending-level organization of GJCR and closes the efficient-construction gap left by GCR and the coNP-completeness of FJR verification \cite{peters2021proportional,KalayciLK2025}.

\section{Preliminaries}\label{sec:preliminaries}

In this section, I introduce approval-based multiwinner elections and the Hare- and Droop-FJR demands used in the two guarantees. The other justified-representation axioms are discussed only in Section~\ref{subsec:related-work}, since they are not used in the algorithm or its proof.

\paragraph{Terminology.}
Unless a Droop variant is explicitly specified, the justified-representation axioms discussed in this paper use their standard Hare-quota definitions. In particular, the unqualified term \emph{FJR} refers to the original axiom of Peters, Pierczy\'nski, and Skowron. When comparing the two quota conventions, I call it \emph{Hare-FJR}; the variant of Casey and Elkind is always called \emph{Droop-FJR}.

\begin{definition}[Approval-based multiwinner election]\label{def:election}
    An \emph{approval-based multiwinner election} is a tuple
    \[
        E=(N,C,\mathcal A,k),
    \]
    where:
    \begin{itemize}
        \item $N=[n]=\{1,2,\ldots,n\}$ is a finite, nonempty set of voters;
        \item $C$ is a finite set of candidates, with $m=|C|$;
        \item $\mathcal A=(A_1,\ldots,A_n)$ is the approval profile, where $A_i\subseteq C$ is the set of candidates approved by voter $i$; and
        \item $k\in\mathbb Z_{>0}$ is the target committee size, with $1\le k\le m$.
    \end{itemize}
\end{definition}

\begin{definition}[Committee]\label{def:committee}
    Given an election $E=(N,C,\mathcal A,k)$, a \emph{committee} is a subset $W\subseteq C$ of size $|W|=k$.
\end{definition}

\begin{definition}[Approval utility]\label{def:utility}
    For every voter $i\in N$ and candidate set $X\subseteq C$, the \emph{approval utility} (or satisfaction) of voter $i$ from $X$ is
    \[
        u_i(X):=|A_i\cap X|.
    \]
\end{definition}

Define the normalized Hare and Droop thresholds by
\[
    \lambda_H:=\frac{n}{k},
    \qquad
    \lambda_D:=\frac{n}{k+1}.
\]
The conventional single-seat integer Droop quota is $\lfloor n/(k+1)\rfloor+1$. Casey and Elkind define Droop $\ell$-cohesiveness and Droop weak cohesiveness using the strict thresholds $\ell n/(k+1)$ and $|T|n/(k+1)$, respectively, rather than by multiplying the rounded single-seat quota~\cite{casey2026droop}. I follow their formal definition; the corresponding normalized quantity $\lambda_D=n/(k+1)$ is also the candidate price in the Droop instantiation below.

\begin{definition}[Hare and Droop weak cohesiveness]\label{def:weak-cohesiveness}
    Let $1\le \ell\le k$, let $S\subseteq N$ be nonempty, and let $T\subseteq C$. The group $S$ is
    \begin{itemize}
        \item \emph{Hare weakly $(\ell,T)$-cohesive} if
              \[
                  |S|\ge \lambda_H|T|
                  \qquad\text{and}\qquad
                  u_i(T)\ge\ell\quad\text{for every }i\in S;
              \]
        \item \emph{Droop weakly $(\ell,T)$-cohesive} if
              \[
                  |S|> \lambda_D|T|
                  \qquad\text{and}\qquad
                  u_i(T)\ge\ell\quad\text{for every }i\in S.
              \]
    \end{itemize}
    In either case, $T$ is called a \emph{witness set}. A triple $(S,T,\ell)$ satisfying the corresponding conditions is called a Hare-FJR demand or a Droop-FJR demand, respectively.
\end{definition}

The utility condition does not require the voters to approve the same $\ell$ candidates: each voter may obtain her $\ell$ approved candidates from a different part of the common witness set $T$. This distributed agreement is the weak-cohesiveness feature that distinguishes FJR from EJR.

\begin{definition}[Hare-FJR and Droop-FJR]\label{def:fjr}
    A committee $W$ satisfies \emph{Hare Full Justified Representation} (Hare-FJR) if, for every Hare-FJR demand $(S,T,\ell)$, there exists a voter $i\in S$ such that
    \[
        u_i(W)\ge\ell.
    \]
    It satisfies \emph{Droop Full Justified Representation} (Droop-FJR) if the same conclusion holds for every Droop-FJR demand. The latter definition is due to Casey and Elkind~\cite{casey2026droop}.
\end{definition}

\begin{observation}
    Every Droop-FJR committee satisfies Hare-FJR.
\end{observation}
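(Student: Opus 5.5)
The plan is to show that every Hare-FJR demand is also a Droop-FJR demand. Once that holds, the conclusion demanded by Hare-FJR is one that a Droop-FJR committee already guarantees. Let $W$ satisfy Droop-FJR and let $(S,T,\ell)$ be a Hare-FJR demand. By Definition~\ref{def:weak-cohesiveness}, $S$ is nonempty, $1\le\ell\le k$, $u_i(T)\ge\ell$ for every $i\in S$, and $|S|\ge\lambda_H|T|$. The utility condition is the same in both notions, so only the size condition needs checking.

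The key comparison is $\lambda_D=n/(k+1)<n/k=\lambda_H$, which holds because $n\ge1$. I will split on $|T|$.

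\emph{Case $|T|\ge1$.} Here $\lambda_D|T|<\lambda_H|T|\le|S|$, so $|S|>\lambda_D|T|$ holds strictly.

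\emph{Case $T=\emptyset$.} This case cannot occur, because $u_i(T)=0<1\le\ell$ contradicts the utility condition for any $i\in S$, and $S$ is nonempty. Even if it could occur, $|S|\ge1>0=\lambda_D|T|$ would still give the strict inequality.

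In both cases $(S,T,\ell)$ is a Droop-FJR demand. By Definition~\ref{def:fjr}, Droop-FJR then gives some $i\in S$ with $u_i(W)\ge\ell$, which is exactly what Hare-FJR requires for this demand.

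The only point needing care is the strict inequality in the Droop definition. It relies on $|T|>0$, or alternatively on $S$ being nonempty, and that is why the empty-$T$ case is handled separately. There is no substantive obstacle beyond this.
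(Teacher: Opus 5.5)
Your proof is correct and follows essentially the same route as the paper: both show that every Hare-FJR demand is a Droop-FJR demand, using that the witness set is nonempty (forced by $u_i(T)\ge\ell\ge1$ with $S\neq\varnothing$), so that $\lambda_D|T|<\lambda_H|T|\le|S|$. Your separate treatment of $T=\varnothing$ just spells out the nonemptiness fact that the paper states in one line.
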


\begin{proof}
    Every FJR demand has a nonempty witness set. Since
    \[
        \frac{n}{k}|T|>\frac{n}{k+1}|T|,
    \]
    every Hare-FJR demand is also a Droop-FJR demand. Hence, Droop-FJR implies Hare-FJR.
\end{proof}

\begin{definition}[FJR violation witness]\label{def:fjr-violation}
    A triple $(S,T,\ell)$ is a Hare-FJR violation witness for $W$ if
    \[
        |S|\ge\lambda_H|T|
        \qquad\text{and}\qquad
        u_i(T)\ge\ell>u_i(W)
        \quad\text{for every }i\in S.
    \]
    It is a Droop-FJR violation witness if the first inequality is replaced by
    \[
        |S|>\lambda_D|T|.
    \]
    A committee satisfies the respective FJR axiom if and only if it admits no corresponding violation witness.
\end{definition}

Every Hare- or Droop-FJR demand satisfies
\[
    1\le\ell\le|T|\le k.
\]
The utility condition gives $\ell\le|T|$. In the Hare case, $|S|\le n$ and $|S|\ge(n/k)|T|$ imply $|T|\le k$. In the Droop case, $|T|\ge k+1$ would imply $|S|>n$, which is impossible.

Finally, fix a voter order
\[
    i_1,i_2,\ldots,i_n
\]
and a candidate order
\[
    c_1,c_2,\ldots,c_m.
\]
The candidate order selects deterministically among affordable candidates, while the voter order specifies a deterministic capped-charging vector. These orders may affect payments and the resulting committee, but they do not affect either FJR guarantee.

\section{Construction of the Algorithm}\label{sec:algorithm}

The construction is parameterized by a common candidate price
\[
    \lambda\in\{\lambda_H,\lambda_D\}.
\]
Every voter receives one unit of initial budget. Under the Hare instantiation, the aggregate price of a witness set $T$ is $\lambda_H|T|$ and the demand condition is $|S|\ge\lambda_H|T|$. Under the Droop instantiation, the price is $\lambda_D|T|$ and the demand condition is the strict inequality $|S|>\lambda_D|T|$. These are aggregate accounting relations; they do not assert that $T$ is directly fundable subject to the approval constraints.

\subsection{State after each paid purchase}

Every voter starts with one unit of budget. Every candidate purchased during the paid part of the construction has the common price $\lambda$, chosen as either $\lambda_H$ or $\lambda_D$.

Let $t$ denote the number of paid purchases completed so far. After exactly $t$ paid purchases, the persistent algorithmic state is
\[
    \bigl(P^{(t)},\mathbf r^{(t)}\bigr),
    \qquad
    \mathbf r^{(t)}:=(r_i^{(t)})_{i\in N},
\]
where:
\begin{itemize}
    \item $P^{(t)}\subseteq C$ is the set of candidates purchased in the first $t$ paid steps; and
    \item $r_i^{(t)}\in\mathbb Q_{\ge0}$ is voter $i$'s remaining balance after those $t$ purchases.
\end{itemize}

Initially,
\[
    t=0,
    \qquad
    P^{(0)}=\varnothing,
    \qquad
    r_i^{(0)}=1 \quad(i\in N).
\]
At every point, $|P^{(t)}|=t$. All arithmetic and comparisons are exact.

If the candidate bought from state $t$ is $c^{(t)}$, let $x_i^{(t)}$ denote voter $i$'s charge for that purchase, and write
\[
    x_{i,c^{(t)}}:=x_i^{(t)}.
\]
Thus, $x_{i,c}$ denotes the unique historical payment made by voter $i$ when the paid candidate $c$ was selected. A voter is charged only for a candidate she approves.

\subsection{Decision snapshots, active voters, gaps, and offers}

The representation level $\ell$ of a potential FJR witness is not known in advance. The descending phases therefore inspect every possible target level, beginning with the most demanding one.

The algorithm processes thresholds
\[
    h=k,k-1,\ldots,1.
\]
Whenever the body of the while loop is entered at threshold $h$ with $t<k$, call the pair $(h,t)$ a \emph{decision snapshot}. Active sets, gaps, offers, and candidate supports are defined only at decision snapshots.

At a decision snapshot $(h,t)$, define the active-voter set
\[
    \Active_h^{(t)}
    :=
    \left\{
    i\in N:
    |A_i|\ge h
    \text{ and }
    u_i(P^{(t)})<h
    \right\}.
\]
Thus, an active voter approves at least $h$ candidates in total but currently has fewer than $h$ approved candidates in the paid set $P^{(t)}$.

A remaining balance alone records purchasing power but not how many additional approved winners the voter still needs. I therefore normalize the balance by this representation deficit. For each $i\in\Active_h^{(t)}$, define her remaining gap and offer by
\[
    \delta_{i,h}^{(t)}:=h-u_i(P^{(t)}),
    \qquad
    \alpha_{i,h}^{(t)}:=\frac{r_i^{(t)}}{\delta_{i,h}^{(t)}}.
\]
No offer is defined or used for an inactive voter. Since $u_i(P^{(t)})<h$ for every $i\in\Active_h^{(t)}$ and both quantities are integers,
\[
    \delta_{i,h}^{(t)}=h-u_i(P^{(t)})\ge1.
\]
Hence, the denominator in $\alpha_{i,h}^{(t)}$ is never zero.

For each unselected candidate $c\in C\setminus P^{(t)}$, define the total support available for $c$ at snapshot $(h,t)$ by
\[
    \Bid_h^{(t)}(c)
    :=
    \sum_{\substack{i\in\Active_h^{(t)}\\ c\in A_i}}
    \alpha_{i,h}^{(t)}.
\]
The candidate is affordable at snapshot $(h,t)$ when
\[
    \Bid_h^{(t)}(c)\ge\lambda.
\]

All active sets, gaps, offers, and candidate supports used in one purchase decision are computed from the same state snapshot $(P^{(t)},\mathbf r^{(t)})$. They remain fixed while that candidate is charged. Only after charging is complete does the algorithm create the persistent state $(P^{(t+1)},\mathbf r^{(t+1)})$.

If a phase $h$ ends because no candidate is affordable, its last decision snapshot is called its \emph{normal terminal snapshot}. If a purchase makes $t=k$, the algorithm stops the paid construction immediately; no new decision snapshot is formed, and in particular no post-purchase offer $\alpha_{i,h}^{(k)}$ is defined.

\subsection{Exact capped charging}

The charging cap is chosen so that an active voter never pays more for one representation unit than her current per-unit offer. This will ensure that, as long as she remains deficient, her later offer can dominate every payment she has already made.

Suppose candidate $c$ is affordable at decision snapshot $(h,t)$. Its active approvers are considered in the fixed voter order. Let $x_i^{(t)}$ be voter $i$'s charge for this purchase. The charging routine imposes the cap
\[
    0\le x_i^{(t)}\le \alpha_{i,h}^{(t)}
\]
and stops as soon as the aggregate charge reaches $\lambda$. Voters who are inactive or do not approve $c$ pay zero.

The new remaining balance is
\[
    r_i^{(t+1)}:=r_i^{(t)}-x_i^{(t)}
    \qquad(i\in N).
\]

\begin{algorithm}[H]
    \caption{Exact capped charging at decision snapshot $(h,t)$}
    \label{alg:charge}
    \begin{algorithmic}[1]
        \Procedure{CappedCharge}{$c,\Active_h^{(t)},\alpha_h^{(t)},\mathbf r^{(t)},\lambda$}
        \State $R\gets\lambda$ \Comment{unpaid remainder of the candidate price}
        \State $x_i^{(t)}\gets0$ for every $i\in N$
        \For{$i$ in the fixed voter order $i_1,i_2,\ldots,i_n$}
        \If{$R=0$}
        \State \textbf{break}
        \EndIf
        \If{$i\in\Active_h^{(t)}$ \textbf{and} $c\in A_i$}
        \State $x_i^{(t)}\gets\min\{\alpha_{i,h}^{(t)},R\}$
        \State $R\gets R-x_i^{(t)}$
        \EndIf
        \EndFor
        \ForAll{$i\in N$}
        \State $r_i^{(t+1)}\gets r_i^{(t)}-x_i^{(t)}$
        \EndFor
        \State \Return $(\mathbf x^{(t)},\mathbf r^{(t+1)})$
        \EndProcedure
    \end{algorithmic}
\end{algorithm}

Whenever the routine is called, affordability guarantees
\[
    \sum_{\substack{i\in\Active_h^{(t)}\\c\in A_i}}
    \alpha_{i,h}^{(t)}\ge\lambda.
\]
Therefore, the routine reaches $R=0$, and the candidate receives total payment exactly $\lambda$.

\subsection{Complete algorithm}

\begin{breakablealgorithm}
    \caption{Parameterized Descending-Budget Algorithm for FJR}
    \label{alg:main}
    \begin{algorithmic}[1]
        \Require An approval-based multiwinner election $E=(N,C,(A_i)_{i\in N},k)$, a price $\lambda\in\{\lambda_H,\lambda_D\}$, and fixed voter and candidate orders
        \Ensure A committee $W\subseteq C$ with $|W|=k$

        \State $t\gets0$
        \State $P^{(0)}\gets\varnothing$
        \State $r_i^{(0)}\gets1$ for every $i\in N$

        \For{$h=k,k-1,\ldots,1$}
        \While{$t<k$}
        \State $\Active_h^{(t)}\gets\{i\in N:|A_i|\ge h\text{ and }u_i(P^{(t)})<h\}$

        \ForAll{$i\in\Active_h^{(t)}$}
        \State $\delta_{i,h}^{(t)}\gets h-u_i(P^{(t)})$
        \State $\alpha_{i,h}^{(t)}\gets r_i^{(t)}/\delta_{i,h}^{(t)}$
        \EndFor

        \ForAll{$c\in C\setminus P^{(t)}$}
        \State $\Bid_h^{(t)}(c)\gets\displaystyle
            \sum_{\substack{i\in\Active_h^{(t)}\\c\in A_i}}\alpha_{i,h}^{(t)}$
        \EndFor

        \State $B_h^{(t)}\gets\{c\in C\setminus P^{(t)}:\Bid_h^{(t)}(c)\ge\lambda\}$
        \If{$B_h^{(t)}=\varnothing$}
        \State \textbf{break} \Comment{$\dagger$}
        \EndIf

        \State $c^{(t)}\gets$ the first candidate of $B_h^{(t)}$ in the fixed candidate order
        \State $(\mathbf x^{(t)},\mathbf r^{(t+1)})\gets
            \Call{CappedCharge}{c^{(t)},\Active_h^{(t)},\alpha_h^{(t)},\mathbf r^{(t)},\lambda}$
        \State $P^{(t+1)}\gets P^{(t)}\cup\{c^{(t)}\}$
        \State $t\gets t+1$
        \EndWhile

        \If{$t=k$}
        \State \textbf{break} \Comment{no post-purchase decision snapshot is formed}
        \EndIf
        \EndFor

        \State $D\gets$ the first $k-t$ candidates of $C\setminus P^{(t)}$ in the fixed candidate order
        \Comment{$D=\varnothing$ when $t=k$}
        \State $W\gets P^{(t)}\cup D$ \Comment{padding candidates are added without payment}
        \State \Return $W$
    \end{algorithmic}
\end{breakablealgorithm}
\begin{definition}
    A phase $h$ ends \textit{normally} if and only if it exits the while loop through $\dagger$.
\end{definition}

Each purchase changes both balances and representation deficits, so offers must be recomputed from the new persistent state. A phase stops precisely when its current deficient voters can no longer fund any remaining candidate at price $\lambda$.

After every paid purchase, the counter $t$ is incremented. If $t<k$, all state-dependent quantities are recomputed at the next decision snapshot. A threshold phase ends normally when no remaining candidate is affordable at that threshold.

\paragraph{Running example.}
Consider an election with six voters, committee size $k=4$, and candidate set
\[
    C=\{a,b,c,d,e\}.
\]
For the Hare-price instantiation, the candidate price is
\[
    \lambda=\lambda_H=\frac{n}{k}=\frac{6}{4}=\frac{3}{2}.
\]
Let the approval sets be
\[
    \begin{aligned}
        A_1=A_2 & =\{a,b\}, \\
        A_3=A_4 & =\{a,c\}, \\
        A_5     & =\{b,c\}, \\
        A_6     & =\{d\}.
    \end{aligned}
\]
Fix the voter order $1,2,\ldots,6$ and the candidate order
$a,b,c,d,e$.

This profile contains a genuinely weakly cohesive group. Let
\[
    S=\{1,2,3,4,5\},
    \qquad
    T=\{a,b,c\},
    \qquad
    \ell=2.
\]
Every voter in $S$ approves exactly two candidates in $T$, and
\[
    |S|=5
    \geq
    \lambda |T|
    =
    \frac{3}{2}\cdot 3
    =
    \frac{9}{2}.
\]
Hence, $S$ is Hare weakly $(2,T)$-cohesive. Notice, however, that
\[
    \bigcap_{i\in S} A_i=\varnothing,
\]
so $S$ is not ordinarily $2$-cohesive. This illustrates the distributed
agreement permitted by FJR.

Initially,
\[
    P^{(0)}=\varnothing,
    \qquad
    r^{(0)}=(1,1,1,1,1,1).
\]
No voter approves at least three candidates, so phases $h=4$ and $h=3$
end normally without a purchase.

At the first decision snapshot of phase $h=2$, voters $1,\ldots,5$ are
active. Each has gap $2$ and therefore offers
\[
    \alpha_{i,2}^{(0)}=\frac{1}{2}
    \qquad (i=1,\ldots,5).
\]
The candidate bids are
\[
    \begin{aligned}
        \operatorname{bid}_{2}^{(0)}(a)
         & =\frac{1}{2}+\frac{1}{2}
        +\frac{1}{2}+\frac{1}{2}
        =2,                                     \\
        \operatorname{bid}_{2}^{(0)}(b)
         & =\frac{1}{2}+\frac{1}{2}+\frac{1}{2}
        =\frac{3}{2},                           \\
        \operatorname{bid}_{2}^{(0)}(c)
         & =\frac{1}{2}+\frac{1}{2}+\frac{1}{2}
        =\frac{3}{2}.
    \end{aligned}
\]
All three candidates are affordable, and the fixed candidate order selects
$a$. The capped-charging routine charges voters $1$, $2$, and $3$ an amount
$1/2$ each, while voter $4$ pays zero. Thus,
\[
    P^{(1)}=\{a\},
    \qquad
    r^{(1)}
    =
    \left(
    \frac{1}{2},
    \frac{1}{2},
    \frac{1}{2},
    1,
    1,
    1
    \right).
\]

This purchase illustrates why offers must be recomputed. Voter $4$ approved
the selected candidate but paid nothing. Her balance remains $1$, while her
gap decreases from $2$ to $1$. Consequently, her offer increases from
\[
    \alpha_{4,2}^{(0)}=\frac{1}{2}
    \qquad\text{to}\qquad
    \alpha_{4,2}^{(1)}=1.
\]
At the new snapshot, the relevant offers are
\[
    \begin{aligned}
        \alpha_{1,2}^{(1)}
        =\alpha_{2,2}^{(1)}
        =\alpha_{3,2}^{(1)}
         & =\frac{1}{2}, \\
        \alpha_{4,2}^{(1)}
         & =1,           \\
        \alpha_{5,2}^{(1)}
         & =\frac{1}{2}.
    \end{aligned}
\]
Therefore,
\[
    \begin{aligned}
        \operatorname{bid}_{2}^{(1)}(b)
         & =\alpha_{1,2}^{(1)}
        +\alpha_{2,2}^{(1)}
        +\alpha_{5,2}^{(1)}
        =\frac{3}{2},          \\
        \operatorname{bid}_{2}^{(1)}(c)
         & =\alpha_{3,2}^{(1)}
        +\alpha_{4,2}^{(1)}
        +\alpha_{5,2}^{(1)}
        =2.
    \end{aligned}
\]
Both candidates are affordable, and the fixed order selects $b$. Voters
$1$, $2$, and $5$ each pay $1/2$, giving
\[
    P^{(2)}=\{a,b\},
    \qquad
    r^{(2)}
    =
    \left(
    0,
    0,
    \frac{1}{2},
    1,
    \frac{1}{2},
    1
    \right).
\]

Voters $1$ and $2$ now have utility $2$ and are inactive in phase $h=2$.
Voters $3$, $4$, and $5$ remain active, each with gap $1$. Their offers for
$c$ are respectively
\[
    \frac{1}{2},
    \qquad
    1,
    \qquad
    \frac{1}{2},
\]
so
\[
    \operatorname{bid}_{2}^{(2)}(c)=2.
\]
The algorithm purchases $c$. The charging routine assigns payment $1/2$ to
voter $3$ and payment $1$ to voter $4$; voter $5$ pays zero. Hence,
\[
    P^{(3)}=\{a,b,c\},
    \qquad
    r^{(3)}
    =
    \left(
    0,
    0,
    0,
    0,
    \frac{1}{2},
    1
    \right).
\]
At this point, every voter in $S$ has utility exactly $2$:
\[
    u_i(P^{(3)})=2
    \qquad (i\in S).
\]
Thus, no voter is active at the next snapshot of phase $h=2$, and the phase
ends normally.

In phase $h=1$, voters $1,\ldots,5$ are already represented and are inactive.
Only voter $6$ is active. She has balance $1$, gap $1$, and therefore offers
$1$ to candidate $d$. Since
\[
    \operatorname{bid}_{1}^{(3)}(d)
    =
    1
    <
    \lambda
    =
    \frac{3}{2},
\]
no candidate is affordable, and phase $h=1$ also ends normally.

The paid construction has selected only three candidates, so one seat remains.
The deterministic padding step adds the first unselected candidate in the
fixed candidate order, namely $d$. The final committee is therefore
\[
    W=\{a,b,c,d\}.
\]

The example illustrates three features of the construction. First, the set
$T=\{a,b,c\}$ witnesses weak cohesiveness even though the voters have no
commonly approved candidate. Second, a voter's offer may increase after an
approved candidate is purchased, because her representation gap can decrease
without a corresponding decrease in her balance. Third, the paid construction
may terminate with fewer than $k$ candidates, after which deterministic
padding completes the committee without affecting the preceding budget
process.

\subsection{Operational interpretation}

At a fixed threshold $h$ and decision snapshot $(h,t)$, every active voter divides her current remaining balance $r_i^{(t)}$ uniformly across the number $\delta_{i,h}^{(t)}$ of additional approved winners she still needs to reach utility $h$. The resulting amount $\alpha_{i,h}^{(t)}$ is her offer for each currently unselected candidate she approves. A candidate is purchased when the sum of its active approvers' offers reaches the selected common price $\lambda\in\{\lambda_H,\lambda_D\}$.

Higher thresholds are processed first. If no candidate is affordable at threshold $h$, the algorithm lowers the threshold to $h-1$. After the paid construction stops, any unfilled seats are padded deterministically and without charging voters. The padding set $D$ remains separate from the paid state $P^{(t)}$.

\section{Complexity}\label{sec:complexity}

I assume that the approval profile is represented by an $n \times m$ Boolean incidence matrix, ensuring that membership queries $c \in A_i$ execute in $O(1)$ time. The approval-set sizes $|A_i|$ are precomputed. For every voter $i$, I maintain the current paid utility:
\[
    s_i^{(t)} := u_i(P^{(t)}),
\]
along with a Boolean indicator for each selected candidate. Upon purchasing a candidate, all utility counters are updated in $O(n)$ time.

Every decision snapshot produces one of two outcomes: either a candidate is purchased, incrementing $t$ by one, or the current phase terminates normally. There are at most $k$ purchases and at most one normally terminating snapshot for each of the $k$ phases. Consequently, the algorithm evaluates at most $2k$ decision snapshots in total.

At any given decision snapshot, the active voters, their respective gaps, and their corresponding offers can be computed in $O(n)$ elementary operations. All candidate bids are subsequently calculated in $O(mn)$ rational additions by scanning the approval matrix. Identifying the first affordable candidate according to the fixed candidate ordering requires $O(m)$ comparisons, while a single capped-charging invocation—including the subsequent utility update—involves $O(n)$ additional operations. Thus, a single decision snapshot requires $O(mn)$ elementary rational operations, rendering the overall time complexity of the paid construction $O(kmn)$ such operations. The initial preprocessing and final padding steps require $O(mn)$ and $O(m)$ time, respectively, both of which are dominated by this bound.

It remains to establish an upper bound on the encoding length of the rational quantities. Let
\[
    L := \operatorname*{lcm}(1, 2, \ldots, k+1).
\]
I note that $L$ is introduced solely for analysis; the algorithm does not explicitly compute it. After $t$ paid purchases, the reduced denominator of every remaining balance $r_i^{(t)}$ divides $L^t$. Indeed, this claim holds trivially at $t = 0$. At a decision snapshot $(h, t)$, every positive gap is an integer in $\{1, \ldots, k\}$, and thus divides $L$. Consequently, the reduced denominator of every offer
\[
    \alpha_{i,h}^{(t)} = \frac{r_i^{(t)}}{\delta_{i,h}^{(t)}}
\]
divides $L^{t+1}$. The same property applies to every candidate bid, as it is a summation of offers whose denominators each divide $L^{t+1}$. The reduced denominator of $\lambda_H=n/k$ divides $k$, while that of $\lambda_D=n/(k+1)$ divides $k+1$; hence the denominator of either permitted price divides $L$. By induction through the capped-charging routine, every unpaid remainder and every individual charge has a reduced denominator dividing $L^{t+1}$. Subtracting such a charge from an existing balance yields a new balance whose reduced denominator similarly divides $L^{t+1}$.

Because $t \le k$ and
\[
    \log L \le \log((k+1)!) = O(k \log k),
\]
every denominator possesses a bit length bounded by $O(k^2 \log k)$. Furthermore, since
\[
    0 \le r_i^{(t)}, \alpha_{i,h}^{(t)}, x_i^{(t)} \le 1,
\]
while every candidate bid, unpaid remainder, and candidate price is bounded above by $n$, each numerator and denominator has a maximum bit length of
\[
    B = O(\log n + k^2 \log k).
\]
By representing all rational numbers in reduced form, every exact addition, subtraction, comparison, division by an integer gap, minimum operation, and normalization step can be executed in $\mathrm{poly}(B)$ bit operations. The overall bit complexity is therefore
\[
    O\left(kmn \cdot \mathrm{poly}(B)\right),
\]
which is polynomial in the representation size of the explicit election instance.

The snapshot bound above guarantees algorithmic termination. Upon termination, the paid set contains size $t \le k$. Given $m \ge k$, I have
\[
    |C \setminus P^{(t)}| = m - t \ge k - t,
\]
ensuring that the deterministic padding step appends exactly $k - t$ distinct candidates. The resulting committee therefore has a size of precisely $k$.
\section{Proofs of the FJR guarantees}\label{sec:proof}

\paragraph{Proof map.}
The common part of the proof is a row--column accounting argument valid for either permitted candidate price. I first establish budget conservation and show that an active voter's current offer dominates all of her historical payments. These facts allow me to assign a mixed cost to any target bundle: selected candidates are priced by historical payments and unselected candidates by current offers, yielding a unit lower bound for every deficient voter row. Normal termination gives the complementary candidate-column bounds. The Hare and Droop proofs differ only when the paid set reaches size $k$ before the witness phase has a normal terminal snapshot: the Hare price exhausts all budgets, whereas the Droop price leaves total residual budget $n/(k+1)$ and requires a witness-specific lower bound.

\subsection{Common accounting lemmas}

The first lemma fixes the budget ledger used on both sides of the double counting.

\begin{lemma}\label{lem:accounting}
    For every persistent state $t$ and every voter $i$,
    \[
        r_i^{(t)}+\sum_{c\in A_i\cap P^{(t)}}x_{i,c}=1.
    \]
    Moreover,
    \[
        \sum_{i\in N}r_i^{(t)}=n-t\lambda.
    \]
    In particular, all balances are nonnegative.
\end{lemma}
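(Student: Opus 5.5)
The plan is an induction on the number $t$ of paid purchases, proving all three claims together. At $t=0$ we have $P^{(0)}=\varnothing$ and $r_i^{(0)}=1$. So the per-voter identity reduces to $1=1$, the total is $n=n-0\cdot\lambda$, and nonnegativity is immediate.

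For the inductive step, suppose the claims hold at state $t<k$ and that candidate $c^{(t)}$ is bought at the decision snapshot $(h,t)$. We have $P^{(t+1)}=P^{(t)}\cup\{c^{(t)}\}$ and $x_{i,c^{(t)}}=x_i^{(t)}$.

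There are two cases for a fixed voter $i$.
\begin{itemize}
\item If $c^{(t)}\in A_i$, the sum over $A_i\cap P^{(t+1)}$ gains the single new term $x_i^{(t)}$, while the balance drops by exactly $x_i^{(t)}$.
\item If $c^{(t)}\notin A_i$, then Algorithm~\ref{alg:charge} charges $i$ nothing. It only sets $x_i^{(t)}$ for active approvers of $c^{(t)}$, and $x_i^{(t)}=0$ otherwise. So neither side of the identity changes.
\end{itemize}
Earlier historical payments $x_{i,c}$ with $c\in P^{(t)}$ are never modified. In both cases the identity $r_i^{(t+1)}+\sum_{c\in A_i\cap P^{(t+1)}}x_{i,c}=1$ therefore follows from the inductive hypothesis.

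For the aggregate identity, sum $r_i^{(t+1)}=r_i^{(t)}-x_i^{(t)}$ over $i$. This gives $\sum_i r_i^{(t+1)}=n-t\lambda-\sum_i x_i^{(t)}$. It remains to show $\sum_i x_i^{(t)}=\lambda$, which is the step I expect to need the most care.

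The remainder $R$ starts at $\lambda$. Each assignment $x_i^{(t)}=\min\{\alpha_{i,h}^{(t)},R\}$ decreases $R$ by exactly $x_i^{(t)}$ and keeps $R\ge 0$. Hence $\lambda-R$ always equals the total charged so far.

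Suppose the loop finished with $R>0$. Then every active approver of $c^{(t)}$ was charged her full offer $\alpha_{i,h}^{(t)}$: whenever the minimum equals $R$, the remainder becomes $0$. The total charged would then be $\Bid_h^{(t)}(c^{(t)})\ge\lambda$ by affordability. This would give $R=\lambda-\Bid_h^{(t)}(c^{(t)})\le 0$, a contradiction. Thus $R=0$ at the end, and $\sum_i x_i^{(t)}=\lambda$.

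Nonnegativity needs two facts.
\begin{itemize}
\item Every charge satisfies $x_i^{(t)}\ge 0$, since both $\alpha_{i,h}^{(t)}$ and $R$ are nonnegative; this uses the inductive hypothesis $r_i^{(t)}\ge0$.
\item Every charge satisfies $x_i^{(t)}\le\alpha_{i,h}^{(t)}=r_i^{(t)}/\delta_{i,h}^{(t)}\le r_i^{(t)}$, because $\delta_{i,h}^{(t)}\ge 1$ for active voters.
\end{itemize}
Hence $r_i^{(t+1)}=r_i^{(t)}-x_i^{(t)}\ge 0$, which also confirms that all historical payments are nonnegative.

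This completes the induction. Padding does not change any $r_i$ or $P^{(t)}$, so the statement only concerns the persistent paid states.
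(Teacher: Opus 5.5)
Your proposal is correct and follows essentially the same route as the paper: the cap $x_i^{(t)}\le\alpha_{i,h}^{(t)}\le r_i^{(t)}$ (from $\delta_{i,h}^{(t)}\ge1$) gives nonnegativity, iterating $r_i^{(t+1)}=r_i^{(t)}-x_i^{(t)}$ with charges only on approved purchased candidates gives the per-voter identity, and a total payment of exactly $\lambda$ per purchase gives the aggregate identity. The only difference is that you make the induction explicit and prove inside the lemma that the capped charging reaches $R=0$, whereas the paper records that fact in the remark following Algorithm~\ref{alg:charge} and simply cites it.
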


\begin{proof}
    At a decision snapshot, an active voter has gap at least one, so
    \[
        \alpha_{i,h}^{(t)}=\frac{r_i^{(t)}}{\delta_{i,h}^{(t)}}\le r_i^{(t)}.
    \]
    The charging cap therefore implies $0\le x_i^{(t)}\le r_i^{(t)}$, and balances remain nonnegative. Since voters are charged only for approved selected candidates, iterating the update
    \[
        r_i^{(t+1)}=r_i^{(t)}-x_i^{(t)}
    \]
    gives the first identity. Every paid candidate receives total payment exactly $\lambda$, so summing over voters gives
    \[
        \sum_{i\in N}r_i^{(t)}=n-t\lambda.
    \]
\end{proof}

The next invariant is what makes historical payments comparable with current purchasing power. It will later allow payments made outside a target bundle to be replaced, in the accounting argument, by current offers on unselected target candidates. The lemma is deliberately stated only at actual decision snapshots and does not introduce a hypothetical offer after a purchase that fills the paid committee.

\begin{lemma}\label{lem:historical_pay}
    Let $(h,t)$ be a decision snapshot reached by the algorithm. If voter $i$ is active at that snapshot, then every historical payment made by $i$ for a selected approved candidate satisfies
    \[
        x_{i,c}\le \alpha_{i,h}^{(t)}
        \qquad\text{for every }c\in A_i\cap P^{(t)}.
    \]
\end{lemma}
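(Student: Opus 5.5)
The plan is to prove a monotonicity statement: as long as voter $i$ stays active, her offer never decreases from one decision snapshot to the next. Then any historical payment, which was capped by the offer at the moment it was made, is dominated by her current offer. Concretely, fix the active voter $i$ at $(h,t)$ and a candidate $c\in A_i\cap P^{(t)}$. Let $c=c^{(t')}$ be bought at decision snapshot $(h',t')$ with $t'<t$. Voter $i$ paid a positive amount only if she was active there, and then the charging cap gives $x_{i,c}\le\alpha_{i,h'}^{(t')}$. If she paid nothing, the claim is trivial since offers are nonnegative. Since phases are processed in decreasing order, $h'\ge h$.

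\textbf{Step 1: $i$ is active at every intermediate snapshot.} Consider the sequence of decision snapshots from $(h',t')$ to $(h,t)$ in the order the algorithm visits them. Every phase $g$ with $h\le g\le h'$ is entered with the counter below $k$, because the counter is monotone and is still $t<k$ at the end. Hence each such phase contributes at least one snapshot, and the sequence is well defined. At any intermediate snapshot $(g,s)$ we have $h\le g\le h'$ and $t'\le s\le t$. Activity at $(h',t')$ gives $|A_i|\ge h'\ge g$. Monotonicity of the paid set gives $u_i(P^{(s)})\le u_i(P^{(t)})<h\le g$. So $i\in\Active_g^{(s)}$, and all gaps along the chain are at least $1$.

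\textbf{Step 2: one-step monotonicity.} Consecutive snapshots in the chain are of two kinds, and I check each.
\begin{itemize}
\item \emph{Same phase, one purchase}: $(g,s)\to(g,s+1)$. If $i$ does not approve $c^{(s)}$, she pays zero and her gap is unchanged, so the offer is unchanged. If she approves it, she pays $x\le r/\delta$, where $r=r_i^{(s)}$ and $\delta=\delta_{i,g}^{(s)}$, and her new gap is $\delta-1\ge1$ by Step~1. Then
\[
\frac{r-x}{\delta-1}\ge\frac{r-r/\delta}{\delta-1}=\frac{r}{\delta}.
\]
\item \emph{Phase change, no purchase}: $(g+1,s)\to(g,s)$. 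The balance is the same and the gap drops by one, so $r/(g+1-u)\le r/(g-u)$.
\end{itemize}

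\textbf{Step 3: chaining.} Chaining these inequalities along the sequence yields $x_{i,c}\le\alpha_{i,h'}^{(t')}\le\alpha_{i,h}^{(t)}$.

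The main obstacle is Step~1, i.e.\ ensuring that every intermediate gap is a genuine positive integer so that the offers are defined. It is exactly what makes the approved-purchase inequality valid, since that case needs $\delta-1\ge1$. This is where the hypothesis of activity at the final snapshot $(h,t)$ is used, together with activity at the payment snapshot (which supplies $|A_i|\ge h'$). The lemma deliberately avoids any post-purchase snapshot at $t=k$, so no undefined offer arises.
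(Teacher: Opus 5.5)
Your proposal is correct and is essentially the paper's argument: it uses the same two kinds of transition between snapshots (a purchase within a phase, and a phase change with no purchase) and the same inequality $(r-x)/(\delta-1)\ge r/\delta$. The only difference is organizational. You fix one payment and chain forward from the snapshot where it was made, so the paper's separate case of a newly eligible voter with $|A_i|=h$ is covered by your observation that activity at the payment snapshot gives $|A_i|\ge h'$.
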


\begin{proof}
    I induct over the chronological sequence of decision snapshots reached by the algorithm. At the first snapshot, $P^{(0)}=\varnothing$, so the statement is vacuous.

    First consider two consecutive decision snapshots in the same phase $h$. The first is $(h,t)$, a candidate $c^{(t)}$ is purchased there, and the next is $(h,t+1)$. Such a next snapshot exists only when $t+1<k$. Fix a voter $i$ who is active at $(h,t+1)$. Since utility never decreases, she was also active at $(h,t)$.

    If $c^{(t)}\notin A_i$, then voter $i$ is not charged, and neither her balance nor her utility changes. Hence,
    \[
        \alpha_{i,h}^{(t+1)}=\alpha_{i,h}^{(t)},
    \]
    while $A_i\cap P^{(t+1)}=A_i\cap P^{(t)}$. The claim follows from the induction hypothesis.

    Now suppose $c^{(t)}\in A_i$. Write
    \[
        d:=h-u_i(P^{(t)}),
        \qquad
        r:=r_i^{(t)},
        \qquad
        a:=\alpha_{i,h}^{(t)}=\frac{r}{d},
    \]
    and let $x:=x_i^{(t)}$. Since $i$ remains active after the purchase, $d\ge2$. The charging rule gives $0\le x\le a$, and at the next decision snapshot her offer is
    \[
        a':=\alpha_{i,h}^{(t+1)}=\frac{r-x}{d-1}.
    \]
    Using $r=ad$,
    \[
        a'-a
        =\frac{r-x}{d-1}-\frac{r}{d}
        =\frac{a-x}{d-1}
        \ge0.
    \]
    Thus, every older payment, which is at most $a$ by the induction hypothesis, is at most $a'$. The new payment also satisfies $x\le a\le a'$. Hence, the bound is preserved.

    It remains to consider a transition from the normal terminal snapshot of phase $h+1$ to the first decision snapshot of phase $h$. No candidate, balance, or historical payment changes at this transition. Fix a voter $i$ who is active in phase $h$.

    If $|A_i|\ge h+1$, then $u_i(P^{(t)})<h<h+1$, so voter $i$ was also active at the normal terminal snapshot of phase $h+1$. By the induction hypothesis, all of her historical payments are at most $\alpha_{i,h+1}^{(t)}$. Since the balance and utility are unchanged and the denominator decreases,
    \[
        \alpha_{i,h}^{(t)}
        =\frac{r_i^{(t)}}{h-u_i(P^{(t)})}
        \ge
        \frac{r_i^{(t)}}{h+1-u_i(P^{(t)})}
        =\alpha_{i,h+1}^{(t)}.
    \]
    Thus, the required bounds remain valid.

    If $|A_i|=h$, then voter $i$ was ineligible in every previously processed, higher-threshold phase. Since only active voters are charged, she has made no historical payment, and the statement holds trivially.

    These cases cover every transition between consecutive decision snapshots. No induction step is required after a purchase that makes $t=k$, because the algorithm forms no further decision snapshot.
\end{proof}

\begin{corollary}\label{cor:positive_balance}
    If voter $i$ is active at a decision snapshot $(h,t)$, then
    \[
        r_i^{(t)}>0.
    \]
\end{corollary}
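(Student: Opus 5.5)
The plan is to argue by contradiction, combining the ledger identity of Lemma~\ref{lem:accounting} with the domination bound of Lemma~\ref{lem:historical_pay}. Suppose voter $i$ is active at the decision snapshot $(h,t)$ but $r_i^{(t)}=0$. Then her offer is $\alpha_{i,h}^{(t)}=r_i^{(t)}/\delta_{i,h}^{(t)}=0$. Lemma~\ref{lem:historical_pay} applies to this very snapshot and gives $x_{i,c}\le 0$ for every $c\in A_i\cap P^{(t)}$. Since charges are nonnegative, every historical payment of voter $i$ is exactly zero.

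Next I feed this into the first identity of Lemma~\ref{lem:accounting}:
\[
    r_i^{(t)}+\sum_{c\in A_i\cap P^{(t)}}x_{i,c}=1 .
\]
With $r_i^{(t)}=0$ and every payment equal to zero, the left-hand side is $0$, which contradicts the identity. Hence $r_i^{(t)}>0$.

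The one point to verify is that Lemma~\ref{lem:historical_pay} is invoked only at a snapshot the algorithm actually reaches. That holds by hypothesis, since $(h,t)$ is a decision snapshot. The empty case $A_i\cap P^{(t)}=\varnothing$ is handled uniformly: the identity then reads $r_i^{(t)}=1>0$ directly. I do not expect a real obstacle. The only subtlety is that the argument cannot be made purely from monotone balances, because a voter could in principle pay out her whole budget. It is the invariant that payments never exceed the current offer that rules out such a voter still being active.
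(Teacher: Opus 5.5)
Your proposal is correct and matches the paper's proof. The chain is the same: assuming $r_i^{(t)}=0$ forces $\alpha_{i,h}^{(t)}=0$; Lemma~\ref{lem:historical_pay} plus nonnegativity of charges then makes every historical payment of voter $i$ zero; and the budget identity of Lemma~\ref{lem:accounting} collapses to $1=0$.
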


\begin{proof}
    Suppose instead that $r_i^{(t)}=0$. Since $i$ is active, her gap is positive and therefore $\alpha_{i,h}^{(t)}=0$. Lemma~\ref{lem:historical_pay} and nonnegativity of payments imply
    \[
        x_{i,c}=0
        \qquad\text{for every }c\in A_i\cap P^{(t)}.
    \]
    The budget identity in Lemma~\ref{lem:accounting} would then give $1=0$, a contradiction.
\end{proof}

To bridge the voter-side and candidate-side estimates, I introduce a mixed cost metric. Historical payments govern selected candidates, whereas terminal offers govern unselected candidates; the proposed cost function applies the corresponding quantity to each case.

\begin{definition}[Mixed Cost]
    At a decision snapshot $(h,t)$, for an active voter $i$ and a set of candidates $B \subseteq A_i$, its cost is defined as:
    \[
        \Cost_{i,h}^{(t)}(B)
        :=
        \left( \sum_{c \in B \cap P^{(t)}} x_{i,c} \right)
        +
        |B \setminus P^{(t)}| \, \alpha_{i,h}^{(t)}.
    \]
\end{definition}

Actual historical payments price selected candidates, whereas the offer at the current decision snapshot prices unselected candidates. Equivalently, interpreting voters as rows and candidates as columns, this metric measures the load of voter $i$ across the columns in $B$.

The core objective of this definition is to guarantee that every active voter can project her full unit budget onto any approved bundle large enough to satisfy the active threshold. Consequently, payments committed to the bundle are preserved, while current offers compensate for both the remaining budget and prior expenditures incurred outside the bundle.
\begin{lemma}\label{lem:cost_lowerbound}
    At every decision snapshot $(h,t)$,
    \[
        \Cost_{i,h}^{(t)}(B)\ge1
    \]
    for every active voter $i$ and every $B\subseteq A_i$ with $|B|\ge h$.
\end{lemma}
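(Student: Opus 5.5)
The plan is to reduce to a minimal bundle and then compare the mixed cost of $B$ with the full unit budget of voter $i$, using Lemma~\ref{lem:accounting} and Lemma~\ref{lem:historical_pay}. Fix a decision snapshot $(h,t)$, an active voter $i$, and $B\subseteq A_i$ with $|B|\ge h$. Write $u:=u_i(P^{(t)})<h$, $\delta:=h-u\ge1$, $r:=r_i^{(t)}$, and $\alpha:=\alpha_{i,h}^{(t)}=r/\delta$.

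First split $B$ into its selected part $B_1:=B\cap P^{(t)}$ and its unselected part $B_2:=B\setminus P^{(t)}$, with $b_1:=|B_1|$ and $b_2:=|B_2|$. Since $B_1\subseteq A_i\cap P^{(t)}$, we have $b_1\le u$. Hence $b_2=|B|-b_1\ge h-b_1$. Let $O:=(A_i\cap P^{(t)})\setminus B_1$ be the set of selected approved candidates outside $B$; then $|O|=u-b_1$.

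By the budget identity in Lemma~\ref{lem:accounting},
\[
\sum_{c\in B_1}x_{i,c}=1-r-\sum_{c\in O}x_{i,c}.
\]
By Lemma~\ref{lem:historical_pay}, each payment in $O$ is at most $\alpha$. This yields
\[
\sum_{c\in B_1}x_{i,c}\ge 1-r-(u-b_1)\alpha.
\]
Substituting into the definition of the mixed cost gives
\[
\Cost_{i,h}^{(t)}(B)\ge 1-r-(u-b_1)\alpha+b_2\alpha
\ge 1-r+\bigl(h-b_1-u+b_1\bigr)\alpha
=1-r+\delta\alpha=1,
\]
where the second inequality uses $b_2\ge h-b_1$ together with $\alpha\ge0$.

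The only delicate point is the direction of the estimates. Payments outside $B$ must be bounded above, which is where the historical-payment invariant enters. The count $b_2$ must be bounded below, and this needs $b_1\le u$, which holds because $B_1$ consists of approved selected candidates. Nonnegativity of $\alpha$ follows from nonnegativity of balances in Lemma~\ref{lem:accounting}. No case split on whether $B$ meets $P^{(t)}$ is needed, since the computation covers $b_1=0$ as well.
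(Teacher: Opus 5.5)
Your proof is correct and follows the paper's argument exactly: you split $B$ at $P^{(t)}$, use the budget identity, bound the $u-b_1$ payments on approved selected candidates outside $B$ by $\alpha_{i,h}^{(t)}$ via Lemma~\ref{lem:historical_pay}, and let the factor $h-u$ cancel the balance. One small correction to your closing remark: $b_2\ge h-b_1$ follows from $|B|\ge h$ alone, whereas $B_1\subseteq A_i\cap P^{(t)}$ is what gives $|O|=u-b_1$ (and the opening mention of reducing to a minimal bundle is never used).
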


\begin{proof}
    Put
    \[
        u:=u_i(P^{(t)}),
        \qquad
        r:=r_i^{(t)},
        \qquad
        a:=\alpha_{i,h}^{(t)}=\frac{r}{h-u},
    \]
    and let
    \[
        Q:=B\cap P^{(t)},
        \qquad q:=|Q|.
    \]
    By Lemma~\ref{lem:accounting}, the total payment on the $u$ selected candidates approved by voter $i$ is $1-r$. Exactly $u-q$ of those candidates lie outside $Q$. Lemma~\ref{lem:historical_pay} bounds each corresponding payment by $a$, so
    \[
        \sum_{c\in Q}x_{i,c}
        \ge 1-r-(u-q)a.
    \]
    Moreover,
    \[
        |B\setminus P^{(t)}|=|B|-q\ge h-q.
    \]
    Consequently,
    \begin{align*}
        \Cost_{i,h}^{(t)}(B)
         & \ge 1-r-(u-q)a+(h-q)a   \\
         & =1-r+(h-u)a             \\
         & =1-r+(h-u)\frac{r}{h-u} \\
         & =1.
    \end{align*}
\end{proof}

The preceding lemma supplies the row lower bound. The next observation supplies the corresponding column capacities: a selected column collected exactly its price, whereas an unselected column at a normal terminal snapshot falls strictly short of that price.

\begin{lemma}\label{lem:terminal_status}
    Suppose phase $h$ ends normally at decision snapshot $(h,\tau_h)$. Then
    \[
        \Bid_h^{(\tau_h)}(c)<\lambda
        \qquad\text{for every }c\in C\setminus P^{(\tau_h)}.
    \]
    Every candidate in $P^{(\tau_h)}$ received total historical payment exactly $\lambda$ when it was purchased.
\end{lemma}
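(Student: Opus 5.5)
Both claims follow almost directly from the definitions of Algorithm~\ref{alg:main} and the charging routine, so I will read them off the control flow rather than build any new machinery. For the first claim, I use the definition of a normal ending. Phase $h$ ends normally exactly when the while loop exits through $\dagger$. That exit happens only at a decision snapshot $(h,\tau_h)$ where the computed set $B_h^{(\tau_h)}$ is empty. By the definition of $B_h^{(\tau_h)}$, emptiness means that no unselected candidate $c\in C\setminus P^{(\tau_h)}$ satisfies $\Bid_h^{(\tau_h)}(c)\ge\lambda$. Since the bids are exact rationals and comparisons are exact, this is precisely $\Bid_h^{(\tau_h)}(c)<\lambda$ for every such $c$. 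The one point I will note explicitly is that the bids here are the ones computed from the snapshot state $(P^{(\tau_h)},\mathbf r^{(\tau_h)})$. No purchase occurs at this snapshot, so the state the lemma refers to is exactly the one the bids were computed from.

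For the second claim, I fix any $c\in P^{(\tau_h)}$. Such a candidate was added at some earlier step $t<\tau_h$ as $c=c^{(t)}$, chosen from a nonempty $B_{h'}^{(t)}$ in some phase $h'\ge h$. It was then charged by \textsc{CappedCharge}. I will reuse the observation already made after Algorithm~\ref{alg:charge}. Affordability gives $\sum_{i\in\Active_{h'}^{(t)},\,c\in A_i}\alpha_{i,h'}^{(t)}\ge\lambda$. Each eligible voter, taken in order, is charged $\min\{\alpha_{i,h'}^{(t)},R\}$, which reduces $R$ by exactly that amount. So $R$ decreases monotonically and never becomes negative. It reaches $0$ before or at the last eligible voter: if it did not, every eligible voter would have paid her full offer, and the total paid would be at least $\lambda$, a contradiction. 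Hence the total charge $\sum_i x_i^{(t)}=\lambda-0=\lambda$.

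Finally, I will note that historical payments $x_{i,c}$ are fixed at the time of purchase and are never modified afterward. They are also consistent with the ledger identity of Lemma~\ref{lem:accounting}. So the total historical payment on $c$ equals $\lambda$ at every later state, including $\tau_h$. I expect no real obstacle. The only care needed is to make the termination of the charging loop with $R=0$ precise, and that is the short monotonicity argument above.
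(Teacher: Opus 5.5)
Your proposal is correct and follows the same route as the paper: at the $\dagger$ exit the affordable set is empty, so every unselected candidate has bid strictly below $\lambda$, and the capped-charging routine collects exactly $\lambda$ for each purchased candidate. Your monotonicity argument that $R$ reaches $0$ only makes explicit what the paper states after Algorithm~\ref{alg:charge}.
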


\begin{proof}
    At a normal terminal snapshot, the affordable set is empty. Since affordability is defined by the weak inequality $\Bid_h^{(\tau_h)}(c)\ge\lambda$, every unselected candidate must satisfy the displayed strict inequality. The second statement follows directly from the exact capped-charging routine.
\end{proof}

\subsection{Hare-FJR guarantee}

The main double-counting argument is carried out at a normal terminal snapshot of the witness threshold. Under the Hare price, the only obstruction is that the paid construction may fill all $k$ seats before such a snapshot is reached. In that event, the total budget is exhausted and every voter must meet the target of the final purchase phase.

\begin{lemma}\label{lem:hare_early_quit}
    Run Algorithm~\ref{alg:main} with $\lambda=\lambda_H$, and suppose the paid set first reaches size $k$ through a purchase made in phase $g$. Then, in the resulting persistent state,
    \[
        u_i(P^{(k)})\ge g
        \qquad\text{for every }i\in N.
    \]
\end{lemma}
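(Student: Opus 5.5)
The plan is to combine budget exhaustion under the Hare price with the per-voter structure of offers at the last decision snapshot. By Lemma~\ref{lem:accounting}, the terminal state satisfies
\[
    \sum_{i\in N} r_i^{(k)} = n - k\lambda_H = 0 .
\]
Since all balances are nonnegative, this forces $r_i^{(k)}=0$ for every voter $i$. The rest of the proof shows that any voter with $u_i(P^{(k)})<g$ would retain a strictly positive balance.

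\emph{Step 1: every voter approves at least $g$ candidates.} Phases run in the order $h=k,k-1,\ldots,g$, and the paid construction stops inside phase $g$. A voter with $|A_i|<g$ is therefore never active at any decision snapshot, since activity at threshold $h$ requires $|A_i|\ge h\ge g$. Only active voters are charged, so such a voter is never charged and keeps $r_i^{(k)}=1$. This contradicts exhaustion, so $|A_i|\ge g$ for all $i$.

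\emph{Step 2: a voter below the target keeps positive budget.} Suppose $u_i(P^{(k)})<g$ for some voter $i$.
\begin{itemize}
    \item Because utilities only increase, $u_i(P^{(k-1)})<g$. Combined with $|A_i|\ge g$, voter $i$ is active at the last decision snapshot $(g,k-1)$, which is the one at which the $k$-th candidate $c^{(k-1)}$ was bought.
    \item By Corollary~\ref{cor:positive_balance}, $r:=r_i^{(k-1)}>0$.
    \item If $c^{(k-1)}\notin A_i$, then $i$ is not charged, so $r_i^{(k)}=r>0$. This contradicts exhaustion.
    \item Otherwise set $d:=g-u_i(P^{(k-1)})$. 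Then $u_i(P^{(k)})=u_i(P^{(k-1)})+1<g$ gives $d\ge 2$.
    \item The charging cap gives $x_i^{(k-1)}\le \alpha_{i,g}^{(k-1)}=r/d$. Hence
    \[
        r_i^{(k)} = r - x_i^{(k-1)} \ge r\Bigl(1-\tfrac1d\Bigr) \ge \tfrac r2 > 0,
    \]
    again contradicting exhaustion.
\end{itemize}
Hence $u_i(P^{(k)})\ge g$ for every $i\in N$.

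The main obstacle is Step 1. The claim is false for voters who approve fewer than $g$ candidates, so the argument must use exhaustion to show that such voters cannot exist. The key observation is that these voters were never eligible in any processed phase and so were never charged. The remaining steps only require care that the last decision snapshot is $(g,k-1)$, so that Corollary~\ref{cor:positive_balance} and the cap at that snapshot apply. No offer after the final purchase is ever needed.
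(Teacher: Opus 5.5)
Your proof is correct and follows the paper's argument essentially step for step. Both use budget exhaustion under $\lambda_H$, the never-charged case $|A_i|<g$, and Corollary~\ref{cor:positive_balance} together with the charging cap at the final snapshot $(g,k-1)$; the only cosmetic difference is that you rule out $|A_i|<g$ for all voters up front, whereas the paper treats it as a subcase under the contradiction hypothesis.
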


\begin{proof}
    In the Hare instantiation, $\lambda=\lambda_H=n/k$. Lemma~\ref{lem:accounting} therefore gives total remaining balance $n-k\lambda_H=0$. Since balances are nonnegative, every balance is zero in the resulting state:
    \[
        r_i^{(k)}=0
        \qquad\text{for every }i\in N.
    \]
    Suppose, for a contradiction, that some voter $i$ satisfies
    \[
        u_i(P^{(k)})<g.
    \]

    If $|A_i|<g$, then voter $i$ was ineligible in phase $g$ and in every previously processed higher-threshold phase. She was therefore never charged, so $r_i^{(k)}=1$, contradicting $r_i^{(k)}=0$.

    Now suppose $|A_i|\ge g$. The final purchase was made from the decision snapshot $(g,k-1)$. Since
    \[
        u_i(P^{(k-1)})\le u_i(P^{(k)})<g,
    \]
    voter $i$ was active at that snapshot. Corollary~\ref{cor:positive_balance} gives
    \[
        r_i^{(k-1)}>0.
    \]
    Let $c^{(k-1)}$ be the final paid candidate. If $c^{(k-1)}\notin A_i$, then $i$ pays zero in the final purchase, so
    \[
        r_i^{(k)}=r_i^{(k-1)}>0,
    \]
    a contradiction.

    If $c^{(k-1)}\in A_i$, then the assumption $u_i(P^{(k)})<g$ implies that the pre-purchase gap
    \[
        d:=g-u_i(P^{(k-1)})
    \]
    satisfies $d\ge2$. By the charging cap,
    \[
        x_i^{(k-1)}
        \le \alpha_{i,g}^{(k-1)}
        =\frac{r_i^{(k-1)}}{d}
        <r_i^{(k-1)}.
    \]
    Hence,
    \[
        r_i^{(k)}
        =r_i^{(k-1)}-x_i^{(k-1)}
        >0,
    \]
    again contradicting $r_i^{(k)}=0$. Therefore, no such voter exists.
\end{proof}

\begin{theorem}[Hare-FJR guarantee]\label{thm:hare}
    For every explicitly represented approval-based multiwinner election $E$, Algorithm~\ref{alg:main} with price $\lambda=\lambda_H$ terminates and returns a size-$k$ committee satisfying Hare-FJR.
\end{theorem}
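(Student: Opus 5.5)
Termination and the committee size are already handled in Section~\ref{sec:complexity}, so only Hare-FJR remains. I will argue by contradiction. Suppose $(S,T,\ell)$ is a Hare-FJR violation witness for the returned $W$. Then $|S|\ge\lambda_H|T|$ and $u_i(T)\ge\ell>u_i(W)$ for every $i\in S$. Since $P^{(t)}\subseteq W$ at termination, $u_i(P)<\ell$ for every $i\in S$ at every persistent state.

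I split into two cases according to whether phase $\ell$ is ever reached with a normal terminal snapshot.

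\emph{Case 1: the paid set reaches size $k$ during some phase $g\ge\ell$.} This includes the possibility that it happens before phase $\ell$ starts. By Lemma~\ref{lem:hare_early_quit}, every voter then satisfies $u_i(W)\ge u_i(P^{(k)})\ge g\ge\ell$. This contradicts $u_i(W)<\ell$ for $i\in S\neq\varnothing$.

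\emph{Case 2: phase $\ell$ ends normally at a snapshot $(\ell,\tau)$.} Every $i\in S$ satisfies $|A_i|\ge u_i(T)\ge\ell$ and $u_i(P^{(\tau)})<\ell$, so every voter in $S$ is active at this snapshot. For each $i\in S$, take $B_i:=A_i\cap T$. This set has $|B_i|\ge\ell$, so Lemma~\ref{lem:cost_lowerbound} gives $\Cost_{\ell,i}^{(\tau)}(B_i)\ge1$. Summing over the rows gives
\[
|S|\le\sum_{i\in S}\Cost_{i,\ell}^{(\tau)}(A_i\cap T).
\]
Next I regroup the right-hand side by columns $c\in T$.

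For $c\in T\cap P^{(\tau)}$, the column contributes $\sum_{i\in S,\,c\in A_i}x_{i,c}\le\lambda$. This holds because payments are nonnegative and the candidate collected exactly $\lambda$ in total (Lemma~\ref{lem:terminal_status}).

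For $c\in T\setminus P^{(\tau)}$, the column contributes $\sum_{i\in S,\,c\in A_i}\alpha_{i,\ell}^{(\tau)}$. Since $S$ is contained in the active set and offers are nonnegative, this is at most $\Bid_\ell^{(\tau)}(c)<\lambda$, again by Lemma~\ref{lem:terminal_status}.

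Adding the columns gives $|S|\le\lambda_H|T|$, with strict inequality as soon as $T\setminus P^{(\tau)}\neq\varnothing$. So I must show that $T\not\subseteq P^{(\tau)}$. If instead $T\subseteq P^{(\tau)}\subseteq W$, then every $i\in S$ would have $u_i(W)\ge u_i(T)\ge\ell$, which is impossible. Hence $|S|<\lambda_H|T|$, contradicting $|S|\ge\lambda_H|T|$.

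It remains to check that the two cases are exhaustive. The loop runs $h=k,\dots,1$. Either some phase $g\ge\ell$ fills the paid set, which is Case 1, or phase $\ell$ is entered with $t<k$. In the latter situation a decision snapshot occurs, and the phase either ends normally (Case 2) or fills the paid set within phase $\ell$ itself (Case 1 with $g=\ell$).

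I expect the main obstacle to be the bookkeeping in the column regrouping. The historical-payment columns must be bounded by the full price $\lambda$ even though only voters in $S$ are counted. The unselected columns need strictness, which requires that every voter of $S$ be active, so that the restricted sum is dominated by the bid. This is the step where the non-strict Hare inequality is defeated by the strict terminal bid inequality, so I will check it most carefully.
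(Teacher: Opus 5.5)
Your proposal is correct and follows essentially the same route as the paper. The full-seat case is dispatched via Lemma~\ref{lem:hare_early_quit}; the paper phrases this as deriving $g<\ell$ and concluding that phase $\ell$ ended normally, which is equivalent to your case split. The normal-terminal case uses the same row--column accounting, with Lemma~\ref{lem:cost_lowerbound}, Lemma~\ref{lem:terminal_status}, and the nonemptiness of $T\setminus P^{(\tau)}$ supplying the strict inequality.
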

\begin{proof}
    Because $P^{(t)}\subseteq W$, padding can only increase voters' approval utilities.
    The proof may therefore analyze the paid set at earlier decision snapshots while using an assumed violation of the final committee $W$ to infer that the relevant voters were still deficient at those snapshots.

    Suppose, for a contradiction, that the output committee $W$ violates Hare-FJR. Then there exist a nonempty set $S\subseteq N$, a set $T\subseteq C$, and a positive integer $\ell$ such that
    \[
        k|S|\ge n|T|
        \qquad\text{and}\qquad
        u_i(T)\ge\ell>u_i(W)
        \quad\text{for every }i\in S.
    \]
    Writing $\lambda_H=n/k$, the quota condition is
    \[
        |S|\ge\lambda_H|T|.
    \]
    Since $|S|\le n$, it follows that $|T|\le k$. Also, $u_i(T)\ge\ell$ implies $\ell\le|T|$, so
    \[
        1\le\ell\le|T|\le k.
    \]

    I first show that phase $\ell$ ends normally. If the paid construction never reaches size $k$, then phase $\ell$ cannot terminate by filling all $k$ paid seats and therefore must end normally.
    Otherwise, let $g$ be the phase in which the paid set first reaches size $k$.
    By Lemma~\ref{lem:hare_early_quit}, $u_i(W)\ge u_i(P^{(k)})\ge g$ for every voter $i$. On the other hand, every $i\in S$ satisfies $u_i(W)<\ell$, so $g<\ell$. Since the phases are processed in descending order, phase $\ell$ is completed before phase $g$ begins.
    Because the paid set first reaches size $k$ only in phase $g$, phase $\ell$ cannot terminate by filling the paid committee. It must therefore end normally.

    Let $(\ell,\tau)$ be the normal terminal snapshot of phase $\ell$, and abbreviate
    \[
        P:=P^{(\tau)},
        \qquad
        \alpha_i:=\alpha_{i,\ell}^{(\tau)}
        \quad(i\in\Active_\ell^{(\tau)}).
    \]
    For every $i\in S$,
    \[
        |A_i|\ge u_i(T)\ge\ell
    \]
    and, because $P\subseteq W$,
    \[
        u_i(P)\le u_i(W)<\ell.
    \]
    Hence, every voter in $S$ is active at the terminal snapshot.

    I now form the row--column accounting matrix implicitly. For voter $i$, selected target candidates are charged at their historical payments and unselected target candidates at the current offer. For each $i\in S$, define
    \begin{align*}
         & B_i:=A_i\cap T,                               \\
         & X:=\sum_{i\in S}\sum_{c\in B_i\cap P}x_{i,c}, \\
         & Y:=\sum_{i\in S}|B_i\setminus P|\,\alpha_i.
    \end{align*}
    Then $|B_i|=u_i(T)\ge\ell$. Summing by voter rows and applying Lemma~\ref{lem:cost_lowerbound}, I obtain
    \begin{align*}
        X+Y & = \left(\sum_{i\in S}\sum_{c\in B_i\cap P}x_{i,c}\right)+\left(\sum_{i\in S}|B_i\setminus P|\,\alpha_i\right) \\
            & =\sum_{i\in S}\left(\bigl(\sum_{c\in B_i\cap P}x_{i,c}\bigr)+|B_i\setminus P|\,\alpha_i\right)                \\
            & =\sum_{i\in S}\Cost_{i,\ell}^{(\tau)}(B_i)\geq |S|.
    \end{align*}

    I next sum the same load by candidate columns. For the selected part, $X$, Lemma~\ref{lem:terminal_status} gives
    \begin{align*}
        X & =\sum_{i\in S}\sum_{c\in A_i\cap T\cap P}x_{i,c} \\
          & =\sum_{c\in T\cap P}\sum_{\substack{i\in S       \\c\in A_i}}x_{i,c}\quad(\text{rearrange terms})\\
          & \le\sum_{c\in T\cap P}\sum_{i\in N}x_{i,c}       \\
          & =\lambda_H|T\cap P|.
    \end{align*}

    For the unselected part, $Y$, rearranging terms gives
    \begin{align*}
        Y & =\sum_{i\in S}|A_i\cap T\setminus P|\,\alpha_i    \\
          & =\sum_{c\in T\setminus P}\sum_{\substack{i\in S   \\c\in A_i}}\alpha_i\quad(\text{rearrange terms})\\
          & \le\sum_{c\in T\setminus P}\Bid_\ell^{(\tau)}(c).
    \end{align*}
    The set $T\setminus P$ is nonempty: if $T\subseteq P$, then every $i\in S$ would satisfy
    \[
        u_i(W)\ge u_i(P)\ge u_i(T)\ge\ell,
    \]
    contrary to the assumed violation. Therefore, Lemma~\ref{lem:terminal_status} implies
    \[
        Y<\lambda_H|T\setminus P|.
    \]
    Combining the row bound, the selected-column bound, the strict unselected-column bound, and the quota condition yields
    \[
        |S|
        \le X+Y
        <\lambda_H\bigl(|T\cap P|+|T\setminus P|\bigr)
        =\lambda_H|T|
        \le|S|,
    \]
    a contradiction. Thus, $W$ satisfies Hare-FJR.
\end{proof}

\subsection{Droop-FJR guarantee}

For the Droop instantiation, the normal-terminal-snapshot argument remains unchanged except for the strict demand inequality. The full-seat case is different: after $k$ purchases the total residual budget is $n/(k+1)$ rather than zero. The following lemma replaces Lemma~\ref{lem:hare_early_quit}.

\begin{lemma}[Droop full-seat lemma]\label{lem:droop_early_quit}
    Run Algorithm~\ref{alg:main} with $\lambda=\lambda_D$, and suppose that the paid set first reaches size $k$ through a purchase made in phase $g$. If $(S,T,\ell)$ is a Droop-FJR violation witness for the resulting committee, then
    \[
        \ell>g.
    \]
\end{lemma}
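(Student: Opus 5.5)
The plan is to argue by contradiction. Suppose $(S,T,\ell)$ is a Droop-FJR violation witness for the output committee $W$ and $\ell\le g$. I will show that the voters of $S$ alone retain strictly more than the total residual budget $n-k\lambda_D=n/(k+1)$ given by Lemma~\ref{lem:accounting}. Since all balances are nonnegative, this is impossible.

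The central step is a per-voter residual bound:
\[
r_i^{(k)}\ \ge\ \frac{g-u_i(P^{(k)})}{g}
\qquad\text{for every } i\in S .
\]
Note that $u_i(P^{(k)})\le u_i(W)<\ell\le g$ for $i\in S$. I split into two cases.

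First, suppose $|A_i|<g$. Then $i$ was ineligible in phase $g$ and in every higher phase, so she never paid and $r_i^{(k)}=1$. The bound holds.

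Second, suppose $|A_i|\ge g$. Then $u_i(P^{(k-1)})\le u_i(P^{(k)})<g$, so $i$ is active at the final decision snapshot $(g,k-1)$. Write $u:=u_i(P^{(k-1)})$ and $r:=r_i^{(k-1)}$.
\begin{itemize}
\item By Lemma~\ref{lem:historical_pay}, each of her $u$ historical payments is at most $\alpha_{i,g}^{(k-1)}=r/(g-u)$.
\item The budget identity of Lemma~\ref{lem:accounting} then gives $1-r\le ur/(g-u)$, that is, $r\ge (g-u)/g$.
\item If the last candidate is not in $A_i$, then $r_i^{(k)}=r$ and the utility is unchanged, so the bound holds.
\item If the last candidate is in $A_i$, the charging cap gives $r_i^{(k)}\ge r-r/(g-u)=r(g-u-1)/(g-u)\ge (g-u-1)/g$. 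This is exactly $(g-u_i(P^{(k)}))/g$.
\end{itemize}

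Next I aggregate over $S$. Since $u_i(P^{(k)})\le \ell-1$, each $i\in S$ has $r_i^{(k)}\ge (g-\ell+1)/g$. The Droop condition together with $|T|\ge\ell$ gives $|S|>\lambda_D|T|\ge n\ell/(k+1)$. Therefore
\[
\sum_{i\in S} r_i^{(k)}\ >\ \frac{n}{k+1}\cdot\frac{\ell(g-\ell+1)}{g}.
\]
For $1\le\ell\le g$ we have $\ell(g-\ell+1)-g=(\ell-1)(g-\ell)\ge 0$, so the right-hand side is at least $n/(k+1)$. This exceeds the total residual $\sum_{i\in N}r_i^{(k)}=n/(k+1)$, which is the desired contradiction.

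I expect the main obstacle to be the per-voter residual bound. Its proof must use the invariant of Lemma~\ref{lem:historical_pay} only at the genuine snapshot $(g,k-1)$, and it must then propagate the bound through the single final charge without defining any post-purchase offer. The final algebraic step, checking that $\ell(g-\ell+1)\ge g$, is routine once that bound is in hand.
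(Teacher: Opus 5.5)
Your proof is correct and follows the paper's argument essentially step for step. It uses the same two-case residual bound at the genuine snapshot $(g,k-1)$ via Lemma~\ref{lem:historical_pay} and the budget identity, the same aggregation with $|S|>\lambda_D\ell$, and the same inequality $(\ell-1)(g-\ell)\ge 0$. The only cosmetic difference is that you first prove $r_i^{(k)}\ge (g-u_i(P^{(k)}))/g$, phrased via $r\ge (g-u)/g$ rather than the paper's equivalent $a\ge 1/g$, and then specialize to $(g-\ell+1)/g$.
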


\begin{proof}
    Suppose, for a contradiction, that $\ell\le g$. Since the paid set has size $k$, the final committee is $W=P^{(k)}$, and every $i\in S$ satisfies
    \[
        u_i(P^{(k)})=u_i(W)<\ell\le g.
    \]
    I first show that each voter $i\in S$ retains at least
    \[
        \frac{g-\ell+1}{g}
    \]
    units of budget after the final purchase.

    Fix $i\in S$. If $|A_i|<g$, then voter $i$ was inactive in phase $g$ and in every previously processed higher-threshold phase. She was never charged before the algorithm stopped, so
    \[
        r_i^{(k)}=1\ge\frac{g-\ell+1}{g}.
    \]

    Now suppose $|A_i|\ge g$. The final purchase was made from the decision snapshot $(g,k-1)$. Put
    \[
        u:=u_i(P^{(k-1)}),
        \qquad
        d:=g-u,
        \qquad
        r:=r_i^{(k-1)},
        \qquad
        a:=\alpha_{i,g}^{(k-1)}=\frac{r}{d}.
    \]
    Because $u_i(P^{(k)})<\ell\le g$, voter $i$ was active at this snapshot. By Lemma~\ref{lem:historical_pay}, each of her $u$ historical payments is at most $a$. The individual budget identity in Lemma~\ref{lem:accounting} therefore gives
    \[
        1
        =r+\sum_{c\in A_i\cap P^{(k-1)}}x_{i,c}
        \le da+ua
        =ga,
    \]
    and hence
    \[
        a\ge\frac1g.
    \]

    Let $c^{(k-1)}$ be the final paid candidate. If $c^{(k-1)}\notin A_i$, then $u_i(P^{(k)})=u<\ell$, so $d=g-u\ge g-\ell+1$, and
    \[
        r_i^{(k)}=r=da
        \ge\frac{g-\ell+1}{g}.
    \]
    If $c^{(k-1)}\in A_i$, then $u_i(P^{(k)})=u+1<\ell$, so $d\ge g-\ell+2$. The charging cap gives $x_i^{(k-1)}\le a$, and therefore
    \[
        r_i^{(k)}
        =da-x_i^{(k-1)}
        \ge(d-1)a
        \ge\frac{g-\ell+1}{g}.
    \]
    Thus, in all cases,
    \[
        \sum_{i\in N}r_i^{(k)}
        \ge\sum_{i\in S}r_i^{(k)}
        \ge |S|\frac{g-\ell+1}{g}.
    \]

    Since $(S,T,\ell)$ is a Droop-FJR violation witness and $|T|\ge\ell$,
    \[
        |S|>\lambda_D|T|\ge\lambda_D\ell.
    \]
    Moreover, for $1\le\ell\le g$,
    \[
        \ell(g-\ell+1)-g
        =(\ell-1)(g-\ell)
        \ge0,
    \]
    so
    \[
        \frac{\ell(g-\ell+1)}{g}\ge1.
    \]
    Combining these estimates yields
    \[
        \sum_{i\in N}r_i^{(k)}
        >\lambda_D\frac{\ell(g-\ell+1)}{g}
        \ge\lambda_D.
    \]
    On the other hand, Lemma~\ref{lem:accounting} and $\lambda_D=n/(k+1)$ give the exact total
    \[
        \sum_{i\in N}r_i^{(k)}
        =n-k\lambda_D
        =\frac{n}{k+1}
        =\lambda_D,
    \]
    a contradiction. Hence $\ell>g$.
\end{proof}

\begin{theorem}[Droop-FJR guarantee]\label{thm:droop}
    For every explicitly represented approval-based multiwinner election $E$, Algorithm~\ref{alg:main} with price $\lambda=\lambda_D$ terminates and returns a size-$k$ committee satisfying Droop-FJR.
\end{theorem}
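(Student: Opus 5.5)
The plan mirrors the proof of Theorem~\ref{thm:hare}, with Lemma~\ref{lem:droop_early_quit} replacing Lemma~\ref{lem:hare_early_quit}. Termination and the committee size follow from Section~\ref{sec:complexity}. Since $P^{(t)}\subseteq W$ and padding only raises utilities, we may study earlier paid sets while using the assumed violation of $W$ to deduce deficiency at those snapshots. Suppose, for a contradiction, that $(S,T,\ell)$ is a Droop-FJR violation witness for $W$. Then $|S|>\lambda_D|T|$ and $u_i(T)\ge\ell>u_i(W)$ for all $i\in S$. By the remark after Definition~\ref{def:fjr-violation}, $1\le\ell\le|T|\le k$.

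\emph{Step 1: phase $\ell$ ends normally.} If the paid set never reaches size $k$, no phase can end by filling the paid seats, so phase $\ell$ ends normally. Otherwise, let $g$ be the phase in which the paid set first reaches size $k$. Lemma~\ref{lem:droop_early_quit} gives $\ell>g$. Phases are processed in descending order, so phase $\ell$ was completed before phase $g$ began. At that time fewer than $k$ candidates were paid, so phase $\ell$ ended normally. This is the only place the Droop argument departs from the Hare argument, and the departure is already handled by the full-seat lemma.

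\emph{Step 2: row--column accounting at the normal terminal snapshot $(\ell,\tau)$.} Put $P=P^{(\tau)}$. Every $i\in S$ satisfies $|A_i|\ge u_i(T)\ge\ell$ and $u_i(P)\le u_i(W)<\ell$, so $i$ is active at this snapshot. Define $B_i=A_i\cap T$, and define $X$ and $Y$ exactly as in the proof of Theorem~\ref{thm:hare}.
\begin{itemize}
\item Lemma~\ref{lem:cost_lowerbound} applies because $|B_i|\ge\ell$, and gives $X+Y\ge|S|$.
\item Summing $X$ by candidate columns and using Lemma~\ref{lem:terminal_status} gives $X\le\lambda_D|T\cap P|$.
\item Summing $Y$ by columns gives $Y\le\sum_{c\in T\setminus P}\Bid_\ell^{(\tau)}(c)$.
\item The set $T\setminus P$ is nonempty: otherwise every $i\in S$ would have $u_i(W)\ge u_i(T)\ge\ell$. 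Lemma~\ref{lem:terminal_status} then gives the strict bound $Y<\lambda_D|T\setminus P|$.
\end{itemize}
Hence $|S|\le X+Y<\lambda_D|T|<|S|$, a contradiction.

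No step is a real obstacle, because the Droop strict inequality only strengthens the final comparison. The point needing care is Step 1: Lemma~\ref{lem:droop_early_quit} must be applied to the final committee $W=P^{(k)}$, which holds in the full-seat case since no padding occurs there. One must also confirm that the normal-termination argument never needs a snapshot after $t=k$. That is guaranteed because phase $\ell$ precedes phase $g$.
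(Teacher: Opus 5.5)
Your proposal is correct and follows the paper's proof essentially step for step. You use Lemma~\ref{lem:droop_early_quit} to get $\ell>g$, which makes phase $\ell$ end normally. You then run the same row--column accounting at its normal terminal snapshot, where the strict Droop condition $|S|>\lambda_D|T|$ closes the chain $|S|\le X+Y<\lambda_D|T|<|S|$.
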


\begin{proof}
    Termination and the size-$k$ output guarantee follow from Section~\ref{sec:complexity}. Suppose, for a contradiction, that the output committee $W$ admits a Droop-FJR violation witness $(S,T,\ell)$. Thus,
    \[
        |S|>\lambda_D|T|
        \qquad\text{and}\qquad
        u_i(T)\ge\ell>u_i(W)
        \quad\text{for every }i\in S.
    \]
    As observed in Section~\ref{sec:preliminaries}, $1\le\ell\le|T|\le k$.

    I first show that phase $\ell$ ends normally. If the paid construction never reaches size $k$, this is immediate. Otherwise, let $g$ be the phase in which the paid set first reaches size $k$. Lemma~\ref{lem:droop_early_quit} rules out $\ell\le g$, so $\ell>g$. Since phases are processed in descending order, phase $\ell$ was completed before phase $g$ began. It could not have ended by filling all $k$ paid seats, and therefore ended normally.

    Let $(\ell,\tau)$ be the normal terminal snapshot of phase $\ell$, and abbreviate
    \[
        P:=P^{(\tau)},
        \qquad
        \alpha_i:=\alpha_{i,\ell}^{(\tau)}
        \quad(i\in\Active_\ell^{(\tau)}).
    \]
    Because $P\subseteq W$, every $i\in S$ satisfies
    \[
        |A_i|\ge u_i(T)\ge\ell
        \qquad\text{and}\qquad
        u_i(P)\le u_i(W)<\ell.
    \]
    Hence every voter in $S$ is active at the terminal snapshot.

    For each $i\in S$, put $B_i:=A_i\cap T$ and define
    \[
        X:=\sum_{i\in S}\sum_{c\in B_i\cap P}x_{i,c},
        \qquad
        Y:=\sum_{i\in S}|B_i\setminus P|\,\alpha_i.
    \]
    Since $|B_i|=u_i(T)\ge\ell$, Lemma~\ref{lem:cost_lowerbound} gives
    \[
        X+Y\ge|S|.
    \]
    Summing by candidate columns and applying Lemma~\ref{lem:terminal_status} gives
    \[
        X\le\lambda_D|T\cap P|.
    \]
    Moreover, $T\setminus P\neq\varnothing$, because $T\subseteq P$ would imply $u_i(W)\ge u_i(T)\ge\ell$ for every $i\in S$. Therefore,
    \[
        Y<\lambda_D|T\setminus P|.
    \]
    Combining the row and column estimates with the strict Droop demand condition yields
    \[
        |S|
        \le X+Y
        <\lambda_D|T|
        <|S|,
    \]
    a contradiction. Thus, $W$ satisfies Droop-FJR.
\end{proof}

\section{Conclusion \& Future Work}

I have presented a single descending-budget construction with two candidate-price instantiations. With price $\lambda_H=n/k$, it computes a Hare-FJR committee; with price $\lambda_D=n/(k+1)$, it computes a committee satisfying the Droop-FJR axiom of Casey and Elkind. Both variants run in polynomial time and use the same active-voter, gap-normalized-offer, capped-charging, and padding rules. The Droop extension changes only the candidate price in the algorithm, but its proof requires a different treatment of the case in which all $k$ paid seats are filled: instead of complete budget exhaustion, the argument uses the exact residual total $n/(k+1)$ and a lower bound on the balances retained by the voters in a putative violation witness.

Several open directions remain. Although both FJR guarantees hold for every fixed voter and candidate order, these orders can affect the payment distribution and the resulting committee; the deterministic implementation is therefore not inherently anonymous or neutral. It would be useful to identify more symmetric charging and candidate-selection procedures preserving the same invariants. Other natural questions include optimizing secondary welfare objectives among the resulting FJR committees and determining whether related budget-accounting ideas can yield efficient guarantees approaching core stability in approval-based multiwinner elections.

\bibliographystyle{alphaurl}
\bibliography{reference}

\end{document}